\newtheorem{prop}{{\bf Proposition}}[section]
\newtheorem{lem}{\bf Lemma}[section]
\newtheorem{thm}{\bf Theorem}[section]
\newtheorem{cor}{\bf Corollary}[section]
\newcommand{\ssup}[1]{^{^{_{#1}}}} 
\newcommand{\lsup}[1]{^{_{^{#1}}}} 
\newcommand{\ssub}[1]{_{_{^{#1}}}} 
\newcommand{\ds}{\kern .1em ds} 
\newcommand{\dv}[1]{\kern .1em d#1} 
\newcommand{\ddt}{\tfrac{d}{dt}} 
\newcommand{\Exp}[1]{\exp\{#1\}}
\newcommand{\norm}[1]{\|#1\|} 
\renewcommand{\Re}{\text{Re}\,}
\begin{document}
\numberwithin{equation}{section}


\title{Epidemics: towards understanding\\ undulation and decay}

\author{Niko Sauer\\{\footnotesize University of Pretoria, South Africa}}

\date{}

\maketitle

\begin{abstract}\noindent\relax
  \small{Undulation of infection levels, usually called {\it waves},
    are not well understood.  In this paper we propose a mathematical
    model that exhibits undulation and decay towards a stable state.
    The model is a re-interpretation of the original SIR-model
    obtained by postulating different constitutive relations whereby
    classical logistic growth with recovery is obtained. The recovery
    relation is based on the premise that infectiousness only lasts
    for some time. This leads to a differential-difference (delay)
    equation which intrinsically exhibits periodicity in its solutions
    but not necessarily decay to asymptotically stable
    equilibrium. Limit cycles can indeed occur. An appropriate
    linearization of the governing equation provides a firm basis for
    heuristic reasoning as well as confidence in numerical
    calculations.}
\end{abstract}

\begin{quotation}
  \noindent
  \small{\textbf{MSC:} 92D30 --- Epidemiology; 34K13 --- Periodic
    solutions.}
\end{quotation}


\begin{quotation}
  \noindent
  \small{{\color{blue}For my soul-mate Adri Prinsloo (1974--2021)
      whose penetrating questions ---why, not how--- contributed
      immensely to the development and understanding of this work. She
      left this life too soon.}}
\end{quotation}

\section{Introduction}
\label{Intro}

In a pioneering paper Hutchinson \cite{Hutchinson:1948} states: ``\dots
that circular paths often exist which tend to be self-correcting
within certain limits, but which break down, producing violent
oscillations \dots'' An equation to model this situation, is
given in a footnote to the paper as
\begin{equation}\label{hutch_eqn}
  \frac{dy}{dt}(t) = y(t)[1-y(t-\tau)]
\end{equation}
with $\tau$ a `time lag'. This, with some constants added, came to be
known as {\it Hutchinson's equation}. The oscillatory nature of its
solutions was the subject of a number of mathematical studies, the
earliest of which are Cunningham \cite{Cunningham:1954}, Wright
\cite{Wright:1955} and Jones \cite{Jones:1961}, sometimes in an
equivalent form. A generalization dealing with several time lags is
treated in Gopalsamy \cite{Gopalsamy:1992}. This was anteceded by Van
der Plank \cite{vdplank:1963}, \cite{vdplank:1965} who considered
plant diseases in which dormant as well as an infectious periods are
taken into account. The equation that carries his name, suitably
transformed, is
\begin{equation}\label{vdPlank}
  \frac{dy}{dt}(t) = Ry(t)[y(t-\tau\ssub{d}) - y(t-\tau\ssub{i})],
\end{equation}
which is of interest to us when the period $\tau\ssub{d}$ of dormancy is
taken to be zero. By and large, studies of equations such as these
have concentrated on long term behaviour of solutions, particularly
decay to a point of equilibrium; ``flattening of the curve''. A
phenomenological model for the prediction of ``waves'' is presented by
Cacciapaglia, Cot \& Sannino in \cite{Cacciapaglia:2022}. 

In present-day clinical contexts, the phenomenon of ``waves of
infection'' seems to be very much at the forefront but does not appear
to be well-understood or even defined. It is the purpose of this paper
to align this with Hutchinson's ``violent oscillations'' which have
been mathematically shown to be exhibited by equations such as
\eqref{hutch_eqn} and its generalizations. The term {\it wave} may be
inappropriate since wave phenomena inseparably involve both time and
space. For this reason we have chosen the word {\it undulation}. This
phenomenon is known to occur in plant as well as animal populations.

Instead of \eqref{hutch_eqn} we shall use the ``logistic delay
equation'' as stated in Ruan \cite{Ruan:2006}. It is of the form
$x'(t) = rx(t)[1 - a\ssub{1}x(t) - a\ssub{2}x(t-\tau)]$, with prime
denoting the time derivative $d/dt$. This equation turns up
occasionally without indication of the assumptions made to derive
it. In \S\ref{verhulst} we give a systematic derivation based on a
general view of the SIR model introduced by Kermack \& McKendick
\cite{Kermack:1927}. This view also leads to the ``theta-model'' which
can be used to obtain better correspondence to observed
data. Moreover, this approach establishes parameters with specific
significance which is easily lost when the treatment is entirely
mathematical. Section \ref{Norm} deals with normalization and scaling of
the equations to obtain more familiar forms. It is also shown there
that the theta-model equation can be transformed to the `standard'
form with constants and variables having different meanings.

Some general results are obtained in \S\ref{general} such as
positivity of solutions when the `initial history' is so, and an upper
bound which implies that solutions cannot grow in an unbridled
manner. In \S\ref{linearize} we obtain a `natural' linearization and
in \S\ref{Linear} treat the linear homogeneous problem with the aid of
the Laplace transform. This elaborate treatment serves to augment
rather sketchy treatments found in the literature, constantly keeping
track of the model parameters. Inversion of the Laplace transform is
discussed in \S\ref{Inversion}. This leads to a series representation
of the solution of the linear homogeneous problem. Section
\ref{estimates} gives sharp estimates of the position of poles and
terms in the series solution. It is shown that convergence to
equilibrium is guaranteed if a constant, given in terms of the
parameters, is sufficiently small.

The significance of the preceding analysis for the nonlinear problem
is discussed in \S\ref{nonlinear}. In \S\ref{example} we present a
numerical example to illustrate that undulations, as exhibited by the
model, can qualitatively be in accordance with observed phenomena. The
example shows that decay to an equilibrium point is possible, but also
that limit cycles are possible steady states. The numerical procedure,
based on a construction in \S\ref{general}, is given in
\S\ref{Numeric}.

The concluding remarks of \S\ref{Unscientific}, although unscientific,
have some seriousness about them.

\section{SIR, Verhulst and more}
\label{verhulst}

Fundamental to many mathematical descriptions of epidemics is the SIR
model and its variants. Here three quantities, the number of
susceptibles $S(t)$, the number of infectious individuals $I(t)$, and
the number recovered (restored?) $R(t)$ at time $t$, are related by
the `conservation principle'
\begin{equation}\label{conserv}
  S(t) + I(t) + R(t) = N
\end{equation}
with the constant $N$ denoting the total `population' considered. The
dynamics of the epidemic is in the system of ordinary differential
equations
\begin{align}
  &I'(t) = -[S'(t) + R'(t)];\label{I-prime}\\
  &S'(t) = \mathfrak{F}\ssub{S}(S(t),I(t),R(t));\label{S-prime}\\
  &R'(t) = \mathfrak{F}\ssub{R}(S(t),I(t),R(t)).\label{R-prime}
\end{align}
Equation \eqref{I-prime} is simply \eqref{conserv} differentiated.
For specific purposes the `driving forces' $\mathfrak{F}\ssub{S}$ and
$\mathfrak{F}\ssub{R}$ are chosen by postulates known as {\it
  constitutive relations}. In the original SIR-model, the postulates
are: $\mathfrak{F}\ssub{S} = - \beta I(t)[S(t)/N]$ and
$\mathfrak{F}\ssub{R} = \gamma I(t)$ with $\beta$ and $\gamma$
positive `rate' constants (dimension [time]$\ssup{-1}$).

The Verhulst logistic growth model \cite{Verhulst:1838}, originally
aimed at population growth, is sometimes used for epidemics when
recoveries do not occur. It may be considered as a fundamental
principle of population dynamics.  In this model the population size
$N$ is called the {\it carrying capacity}, the number of individuals
that can be infected, and recovery is ignored. Let $P(t)$ be the
probability of finding an infectious individual at time $t$. Then the
relevant constitutive relation, to which is added an assignment of 
probability, is expressed as follows:
\begin{align}
  &\mathfrak{F}\ssub{S} = -\beta I(t)[1-P(t)];\label{Verh_1}\\
  &P(t) = I(t)/N,\label{Verh_2}
\end{align}
where $1-P(t)$ signifies the probability of finding a `healthy'
individual.  This, combined with \eqref{I-prime}, yields the classical
{\it logistic equation}
\begin{equation*}
  I'(t) = \beta I(t)[1 - I(t)/N]
\end{equation*}
which is at the core of many informed speculations.

We now turn to the situation where {\it recovery} is also taken
into account and postulate the relation
\begin{equation}\label{Post}
  \mathfrak{F}\ssub{R} = \gamma I(t)P(t-\tau) ,
\end{equation}
with $\tau > 0$ the {\it period of infectiousness}. This means that
the rate of recovery is proportional to the probability of
infectiousness occurring at the earlier instant $t-\tau$. If $P(t)$ is
still specified according to \eqref{Verh_2}, the relations
\eqref{Verh_1}, \eqref{Post} gives the {\it logistic-recovery
  equation}
\begin{equation}\label{LR}
  I'(t) = I(t)\{\beta[1-(I(t)/N)]
    -\gamma(I(t-\tau)/N)\}.
\end{equation}  
This is the equation we shall study, although expressed differently. 

In a model for competing species Gilpin and Ayala \cite{Gilpin:1973}
essentially chose $P(t) = [I(t)/N]^\theta$ with $\theta$ a
positive constant. In the context of our discussion here, this gives
rise to the {\it theta-logistic-recovery equation}
\begin{equation}\label{TLR}
   I'(t) = I(t)\{\beta[1-(I(t)/N)^{\theta}]
     -\gamma(I(t-\tau)/N)^{\theta}\}.
\end{equation} 
One may view this as a mathematical generalization of the logistic
model ($\theta = 1$) to manipulate the sigmoidal curve, but it can be
grounded in probability theory (Feller \cite[II.5--8,
Randomization]{Feller:1966}).

Constitutive relations in SIR models can be found in Della Morte \&
Sannino \cite{DellaMorte:2021} and Buonomo \& Cerasuolo
\cite{Buonomo:2015}. In the latter a time lag is introduced in 
$\mathfrak{F}\ssub{S}$. A SIR-model with delay in
$\mathfrak{F}\ssub{R}$, not the same as \eqref{Post}, has been
suggested by Reiser \cite{Reiser:2020}.

\section{Normalization,  scaling and a reduction}
\label{Norm}

We first normalize the equation \eqref{LR} by the setting 
$F(t) = I(t)/N$ so that $I = N$ corresponds to $F = 1$, and
$P(t)=F(t)$. One may think of $F(t)$ as the {\it level of infectiousness}
at time $t$. The result is
\begin{equation}\label{Normal}
  F'(t) = F(t)\{\beta[1-F(t)] - \gamma F(t-\tau)\}.
\end{equation}
This is a {\it differential-difference equation} or {\it delay
  equation}. To solve the equation for times $t>0$ we need to know the
state (history) of $F$ for times $t \in [-\tau,0]$ (see {\it e.g.},
Bellman \& Cooke \cite{Bellman:1963}). Thus we have the {\it
  initial condition}
\begin{equation}\label{Init-cond}
  F(t) = \Phi(t)\ \text{for }-\tau \le t \le 0,
\end{equation}
with $\Phi$ a given function defined on $[-\tau,0]$.

If it is assumed that an asymptotically, nonzero stable state for
solutions of \eqref{Normal} exists, that is, if $F'(t) \to 0$ as
$t \to \infty$ and the limit $F\ssub{\infty} = \lim_{t\to\infty} F(t)$
exists, it follows that $F\ssub{\infty} = \beta/(\beta + \gamma)$.  We
use this parameter to scale the equation \eqref{Normal}. In addition
the parameter $\tau$ is used as unit of time to obtain a completely
dimensionless formulation. Thus we define the new variables
$t\ssub{\tau} = t/\tau$ and
$f(t\ssub{\tau}) = F(t)/F\ssub{\infty} = F(\tau
t\ssub{\tau})/F\ssub{\infty}$ to obtain, in place of \eqref{Normal},
\begin{align}
  f'(t\ssub{\tau}) &= \beta\tau f(t\ssub{\tau}) \left\{\left[1 -
          \left(\frac{\beta}{\beta+\gamma}\right)f(t\ssub{\tau})\right]
          - \left(\frac{\gamma}{\beta+\gamma}\right)f(t\ssub{\tau}-1)\right\}
          \notag\\
        &= \tilde{\beta} f(t\ssub{\tau}) \{[1 - \beta\ssup*f(t\ssub{\tau})]
          - \gamma\lsup* f(t\ssub{\tau}-1)\}\label{expr1}
\end{align}
with $\tilde\beta := \beta\tau$,
$\beta\ssup* := \beta/(\beta+\gamma)$,
$\gamma\lsup* := \gamma/(\beta+\gamma)$ and the prime denoting
$d/dt\ssub{\tau}$.  We note also the identity
$\beta\ssup* + \gamma\lsup* = 1$ which will be important at a later
stage and gives rise to an alternative form discussed in
\S\ref{Unscientific}. The initial condition \eqref{Init-cond} now has
the form
\begin{equation}\label{init-cond}
  f(t\ssub{\tau}) = \phi(t\ssub{\tau})
      := \frac{\Phi(\tau t\ssub{\tau})}{F\ssub{\infty}}
       \ \text{for }-1 \le t\ssub{\tau} \le 0.
\end{equation}
It is clear that if a non-zero stable asymptote exists, it has the
value $f\ssub{\infty} = 1$. Also note that \eqref{expr1} has the same
form as the ``logistic delay equation'' mentioned in \S\ref{Intro} but
the constants can be interpreted in terms of the constitutive
relations \eqref{Verh_1}, \eqref{Post}.

Under the transformations just described, the equation \eqref{TLR} has
the form
\begin{equation*}
  f'(t\ssub{\tau}) = \tilde{\beta}
  f(t\ssub{\tau})\{1-\beta\ssup{*}f^{\theta}(t\ssub{\tau} )
     - \gamma\lsup{*}f^{\theta}(t\ssub{\tau}-1)\}.
\end{equation*}
This equation can assume a less intimidating look if we replace
$f^{\theta}(t\ssub{\tau})$ with $f(t\ssub{\tau})$ to obtain an
equation of precisely the same form as \eqref{expr1} except that now
$\tilde{\beta} = \theta\beta\tau$. Thus we go on to study
\eqref{expr1}.

\section{Some general considerations}
\label{general}

In this section we consider the scaled equation \eqref{expr1} together
with the initial condition \eqref{init-cond}. We shall abuse notation
by writing $t$ instead of the dimensionless time $t\ssub{\tau}$. The
initial level $f\ssub{0} := f(0)$ will be of significance. 

First we obtain a formal representation of the solution by the
substitution $g(t) = 1/f(t)$, familiar for equations of the
Bernoulli-kind. This leads to
\begin{equation}\label{Bernoulli}
  g'(t) + \tilde{\beta}[1 - \gamma\lsup{*}f(t-1)]g(t) =
     \beta\ssup{*}\tilde{\beta}.
\end{equation}
If we interpret the term $f(t-1)$ as $1/g(t-1)$ this is a
differential-difference equation for $g$ with initial condition
\begin{equation}\label{initg}
  g(t) = 1/\phi(t) \text{ for } t \in [-1,0].
\end{equation}
Associated with this equation we have the integrating factor
\begin{equation}\label{intf1}
  i(t) := \tilde{\beta}[t - \gamma\lsup{*}\psi(t)];\
    \text{ with } \psi(t) := \int_0^t f(s-1)\dv{s} 
      = \int_{-1}^{t-1} f(s)\dv{s}, 
\end{equation}
so that
\begin{equation}\label{tranfd_eqn}
  \ddt[\exp\{i(t)\}g(t)\}] = \beta\ssup{*}\tilde{\beta}\exp\{i(t)\}.
\end{equation}
This can be integrated directly. However, further integration by parts
of the term on the right yields:
\begin{align}
  \exp\{i(t)\}g(t) &= g\ssub{0} - \beta\ssup{*}
     + \beta\ssup{*} \exp\{i(t)\}\notag\\
     &\qquad + \tilde{\beta}\beta\ssup{*}\gamma\ssup{*}
         \int_0^t\exp\{i(s)\}f(s-1)\dv{s},\label{rep1}
\end{align}
where $g\ssub{0} = 1/f\ssub{0}$.

The formal calculations above can be placed on a firmer footing under
the following hypotheses about the initial state which will be taken
for granted from now on:
\begin{enumerate}
  \item[H1.] {\sl The initial state $\phi$ is continuous on} $[-1,0]$.
  \item[H2.] $\phi(t) > 0$ {\sl for} $t \in [-1,0]$.
\end{enumerate}  
We immediately note that the integrating factor $i(t)$ exists and is
differentiable for $t > 0$.

The initial value problem we study is well-posed in the sense of the
following result.

\begin{thm}\label{positivity}
  Under the assumptions \textnormal{H1} and \textnormal{H2}:

  \begin{enumerate}
    \item[\textnormal{(a)}] The function $g(t)$ as represented in
      \eqref{rep1} is a positive solution of \eqref{Bernoulli},
      \eqref{initg} with $g(0) = g\ssub{0}$.
    \item[\textnormal{(b)}] The function $f(t) = 1/g(t)$ is a positive
        solution of \eqref{expr1}, \eqref{init-cond}. 
  \end{enumerate}
\end{thm}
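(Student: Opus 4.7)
The plan is to construct the solution by the classical method of steps for delay equations. On the first interval $[0,1]$ the delay term satisfies $f(t-1) = \phi(t-1)$, which is continuous and strictly positive by H1 and H2. Consequently the integrating factor $i(t)$ defined in \eqref{intf1} is continuously differentiable on $[0,1]$ with $i(0) = 0$, and \eqref{tranfd_eqn} reduces to an ordinary linear first-order equation for $g$ with continuous coefficients. Direct integration from $0$ to $t$ gives
\[
\exp\{i(t)\}g(t) = g\ssub{0} + \beta\ssup{*}\tilde{\beta}\int_0^t \exp\{i(s)\}\ds,
\]
and since $g\ssub{0} = 1/f\ssub{0} > 0$ by H2 and the integrand is strictly positive, $g$ is strictly positive and continuously differentiable on $[0,1]$ with $g(0) = g\ssub{0}$. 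Setting $f(t) := 1/g(t)$ inverts the Bernoulli substitution and yields a continuous positive solution of \eqref{expr1} on $[0,1]$ matching $\phi$ at $t = 0$.

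I would then iterate. With $f$ now continuous and positive on $[-1,1]$, the same argument on $[1,2]$ produces $g$ (and hence $f = 1/g$) positive and continuous on $[1,2]$, using the values of $f$ already computed on $[0,1]$ inside the integrating factor \eqref{intf1}. By induction the construction extends to all of $[0,\infty)$; continuity across the integer nodes is automatic since $i$ is continuous in $t$ and $g$ solves the linear ODE on each subinterval with initial value inherited from the preceding step.

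To obtain the particular representation \eqref{rep1}, differentiate \eqref{intf1} to get
\[
\frac{d}{ds}\exp\{i(s)\} = \tilde{\beta}\bigl[1 - \gamma\lsup{*} f(s-1)\bigr]\exp\{i(s)\},
\]
and then rewrite the integrand as
\[
\beta\ssup{*}\tilde{\beta}\exp\{i(s)\} = \beta\ssup{*}\frac{d}{ds}\exp\{i(s)\} + \beta\ssup{*}\tilde{\beta}\gamma\lsup{*}\exp\{i(s)\}\,f(s-1).
\]
Integrating from $0$ to $t$ and using $i(0) = 0$ produces exactly the right-hand side of \eqref{rep1}; this is the integration by parts mentioned in the text. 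Part (b) then follows by simply reversing the Bernoulli substitution in \eqref{Bernoulli}.

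The main obstacle is really bookkeeping rather than substance: one must ensure that positivity of $g$ is preserved inductively across each integer node so that $f = 1/g$ remains well-defined globally, and that the single formula \eqref{rep1} is valid on all of $[0,\infty)$ and not merely on $[0,1]$. Because each inductive step requires only continuity and positivity of the previously constructed $f$ on the preceding interval---properties secured by the direct-integration formula above---the induction runs smoothly and both (a) and (b) follow.
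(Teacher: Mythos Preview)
Your proof is correct and follows essentially the same route as the paper: the method of steps over the intervals $[0,1],[1,2],\dots$, using the integral representation coming from \eqref{tranfd_eqn} to see that $g$ stays positive on each step, and then inducting. Your write-up is in fact slightly more explicit than the paper's, spelling out both the direct-integration formula and the integration by parts that yields \eqref{rep1}.
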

\begin{proof}
  The proof of the two assertions will simultaneously unfold by
  progression over the time intervals $[0,1), [1,2), \dots$ as in many
  instances to be found in \cite{Bellman:1963}.

  We begin with $0 \le t < 1$. Here the function $\psi$ is totally
  determined by the initial state $\phi$; it is in fact differentiable
  and positive. Thus the integrating factor is of suitable nature and
  the function $g(t)$, as determined by \eqref{rep1}, is positive and
  solves the initial value problem \eqref{Bernoulli},
  \eqref{initg}. It follows that $f(t) > 0$ solves \eqref{expr1},
  \eqref{init-cond}. Thus (a) and (b) are established in
  $[0,1)$. In addition the limit as $t \to 1$ defines the functions
  $g(1)$ and $f(1)$

  The same argument can be followed in the interval $[1,2)$,
  as the crucial properties have been established in $[0,1]$. It is
  clear that a formal induction argument will lead to the required
  outcome. 
\end{proof}
The next result shows that there are limits to infectivity levels and
that there is at most one stable asymptote.  

\begin{thm}\label{Upper-bd}
  Under the assumptions \textnormal{H1} and \textnormal{H2} the
  solution of the initial value problem \eqref{expr1},
  \eqref{init-cond} is restricted in the following ways:

  \begin{enumerate}
    \item[\textnormal{(a)}] It is bounded. Specifically,
      \begin{equation*}
       0 < f(t) < \frac{f\ssub0}{\beta\ssup* f\ssub0
         + (1-\beta\ssup* f\ssub0)\exp\{-\tilde{\beta}t\}}
           \text{ for } t >0.
      \end{equation*}
    \item[\textnormal{(b)}] If the limit $f\ssub{\infty} =
      \lim_{t\to\infty}f(t)$ exists, it is equal to 1
  \end{enumerate}
\end{thm}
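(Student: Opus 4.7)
For part \textnormal{(a)}, my plan is to return to the Bernoulli equation \eqref{Bernoulli} for $g=1/f$ but use a simpler integrating factor than \eqref{intf1}: namely $e^{\tilde{\beta}t}$, treating the delay term as an inhomogeneity. Rewriting \eqref{Bernoulli} as $g'(t) + \tilde{\beta}g(t) = \tilde{\beta}\beta\ssup{*} + \tilde{\beta}\gamma\lsup{*}f(t-1)g(t)$ and multiplying through by $e^{\tilde{\beta}t}$ gives
\begin{equation*}
  \ddt\bigl[e^{\tilde{\beta}t}g(t)\bigr]
    = \tilde{\beta}\beta\ssup{*}e^{\tilde{\beta}t}
      + \tilde{\beta}\gamma\lsup{*}e^{\tilde{\beta}t}f(t-1)g(t).
\end{equation*}
By Theorem \ref{positivity} both $f(t-1)$ and $g(t)$ are strictly positive, so the second term on the right is strictly positive and may be dropped to yield a strict inequality. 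Integrating from $0$ to $t$ and rearranging gives $g(t) > \beta\ssup{*} + (g\ssub{0} - \beta\ssup{*})e^{-\tilde{\beta}t}$. The right side interpolates monotonically between $g\ssub{0}$ and $\beta\ssup{*}$, hence stays positive, so reciprocation is legitimate; substituting $g\ssub{0} = 1/f\ssub{0}$ produces exactly the bound asserted in (a).

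For part \textnormal{(b)}, assume the finite limit $f\ssub{\infty}$ exists. Then $f(t-1) \to f\ssub{\infty}$ as well, and using the identity $\beta\ssup{*} + \gamma\lsup{*} = 1$ noted after \eqref{expr1}, the right side of \eqref{expr1} tends to $\tilde{\beta}f\ssub{\infty}(1-f\ssub{\infty})$; hence $f'(t)$ converges to this same value. But if $f(t)\to f\ssub{\infty}$ finite and $f'(t)\to L$, then applying the mean-value theorem to $f(t+1)-f(t)=f'(\xi\ssub{t})$ with $\xi\ssub{t}\in(t,t+1)$ and letting $t\to\infty$ forces $L=0$. Therefore $f\ssub{\infty}(1-f\ssub{\infty}) = 0$.

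What remains --- and this is where I expect the main obstacle --- is to exclude the value $f\ssub{\infty}=0$, which is also a zero of the right-hand side of \eqref{expr1}. Were $f(t)\to 0$, then for $t$ beyond some $T$ both $\beta\ssup{*}f(t)$ and $\gamma\lsup{*}f(t-1)$ would be below $\tfrac{1}{4}$, so \eqref{expr1} would yield $f'(t)\ge \tfrac{1}{2}\tilde{\beta}f(t)$; integration from $T$ onwards then gives $f(t)\ge f(T)\exp\{\tfrac{1}{2}\tilde{\beta}(t-T)\}$, contradicting $f(t)\to 0$ because $f(T)>0$ by Theorem \ref{positivity}. Hence $f\ssub{\infty}=1$. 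The delicate point is that positivity by itself does not preclude decay to the trivial equilibrium; one must combine it with the specific structure of \eqref{expr1} and the fact that the delayed argument has the same limit as $f$ itself.
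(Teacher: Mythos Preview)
Your proof is correct and follows essentially the same route as the paper's. For part (b) the arguments are identical in structure: first exclude $f_\infty=0$ by noting that near zero the equation forces exponential growth (the paper uses a generic $\varepsilon$ where you take a concrete $\tfrac14$), then use convergence of the right-hand side plus the mean-value theorem to force $f'\to 0$ and hence $f_\infty(1-f_\infty)=0$.

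For part (a) there is a minor but genuine difference worth noting. The paper works from the representation \eqref{rep1}, which was built using the full integrating factor $i(t)=\tilde\beta[t-\gamma\lsup{*}\psi(t)]$, drops the positive integral term, and then invokes $i(t)<\tilde\beta t$ to pass from $\exp\{-i(t)\}$ to $\exp\{-\tilde\beta t\}$. Your route---multiplying \eqref{Bernoulli} by the simpler factor $e^{\tilde\beta t}$ and discarding the manifestly positive delay term at the outset---reaches the same inequality $g(t)>\beta\ssup{*}+(g_0-\beta\ssup{*})e^{-\tilde\beta t}$ in one step. Besides being shorter, your argument is insensitive to the sign of $g_0-\beta\ssup{*}$, whereas the paper's second inequality tacitly needs $g_0\ge\beta\ssup{*}$ (equivalently $f_0\le 1/\beta\ssup{*}$) for the replacement $\exp\{-i(t)\}\mapsto\exp\{-\tilde\beta t\}$ to go the right way. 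So your variant is a small but real improvement.
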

\begin{proof}
  Since $f(t) > 0$ for $t \ge -1$, we see from \eqref{intf1} that
  $i(t) < \tilde{\beta} t$. From the representation
    \eqref{rep1} follows that
    \begin{equation*}
      g(t) > \beta\ssup{*} + (g\ssub{0} - \beta\ssup{*})
             \exp\{-i(t)\}
           > \beta\ssup{*} + (g\ssub{0} - \beta\ssup{*})
             \exp\{-\tilde{\beta}t\}.
    \end{equation*}
  The stated upper bound is obtained by reciprocation and further
  manipulation.

  To prove (b) let us assume that the limit is zero. Then for given
  $\varepsilon \in (0,1)$ there exists $t\ssub{0}$ such that for
  $t > t\ssub{0}$ both $f(t)$ and $f(t-1)$ are less than
  $\varepsilon$. This implies that
  $f'(t) > \tilde{\beta}f(t)[1-\varepsilon]$ since
  $\beta\ssup*+\gamma\lsup* = 1$. It now follows that
  $f(t) >
  f(t\ssub{0})\exp\{\tilde{\beta}[1-\varepsilon](t-t\ssub{0})\}$ for
  all $t > t\ssub{0}$ which is absurd.

  It is seen from \eqref{expr1} that the limit $\lim_{t\to\infty}f'(t)$
  exists. The mean value theorem shows that this limit is zero, and
  hence $f\ssub{\infty}[1 - f\ssub{\infty}] = 0$. 
\end{proof}
We note that $f(t) < M := \max\{f\ssub{0},1/\beta\ssup*\}$. 

\section{Linearization}
\label{linearize}

For the problem at hand we arrive at a suitable linearization by
shifting the (expected) equilibrium level from $f = 1$ to $f = 0$. To
avoid an undue proliferation of symbols, we once again abuse notation by 
the replacing $f(t)-1$ by $f(t)$.  The governing equation
\eqref{expr1} then has the form
\begin{equation}\label{split}
  f'(t) + \tilde{\beta}[\beta\ssup{*}f(t) + \gamma\lsup{*}f(t-1)]
    = -\tilde{\beta}f(t)[\beta\ssup{*}f(t) + \gamma\lsup{*}f(t-1)]. 
\end{equation}
Of course, the initial condition \eqref{init-cond} is adapted
accordingly.

One immediately notes that the left of \eqref{split}, as opposed to
the right, is linear. Intuitively, if $f(t)$ is close to zero for
large $t$, the nonlinear term which is quadratic in $f$ will be
insignificant in the long run. Formally, the right of \eqref{split}
linearizes to zero. This leads us to the linear homogeneous equation
\begin{equation}\label{linzd}
  f'(t) + \tilde{\beta}[\beta\ssup{*}f(t) + \gamma\lsup{*}f(t-1)]
     = 0,
\end{equation}
which will be studied in detail.

Some simplifying notation is introduced: $b := \tilde{\beta}\beta\ssup{*}$,
$c := \tilde{\beta}\gamma\lsup{*}$ and $B := c\exp\{b\}$. 

We obtain an estimate for solutions of the non-homogeneous problem
\begin{equation}\label{non-hom}
  \left.
    \begin{aligned}
      &f'(t) + bf(t) + cf(t-1) = v(t) \text{ for } t > 0;\\
      &f(t) = \phi(t) \text{ for } -1 \le t \le 0,
    \end{aligned}
  \right\}
\end{equation}
with $v$ a given function, continuous on $[0,\infty)$.

\begin{thm}\label{est_non-hom}
   Every solution of \eqref{non-hom} satisfies
  \begin{equation*}
    |f(t)| \le [F\ssub{0} + V(t)]\exp\{(B - b)t\}; \quad t \ge 0,
  \end{equation*}
  with $F\ssub{0} := |f\ssub{0}| +
  c\int_{-1}^0\exp\{bs\}|\phi(s)|\dv{s}$ and  $V(t) = \int_0^t\exp\{bs\}
  |v(s)|\dv{s}$.
\end{thm}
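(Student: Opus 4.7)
The plan is to mimic the integrating-factor technique used in \S\ref{general}, but now applied to the linear equation. Multiplying the equation in \eqref{non-hom} by $\exp\{bt\}$ rewrites it as
\[ \ddt\bigl[e^{bt}f(t)\bigr] \;=\; e^{bt}\bigl[v(t) - c\,f(t-1)\bigr]. \]
Integrating over $[0,t]$ and taking absolute values then yields, with the shorthand $W(t) := e^{bt}|f(t)|$,
\[ W(t) \;\le\; |f\ssub{0}| + V(t) + c\int_0^t e^{bs}|f(s-1)|\dv{s}. \]

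The delay integral is the heart of the argument. The substitution $u = s-1$ converts it into $B\int_{-1}^{t-1}e^{bu}|f(u)|\dv{u}$, using $ce^{bs} = ce^{b}\cdot e^{bu} = B\,e^{bu}$. Splitting at $u = 0$ separates the initial-data contribution over $[-1,0]$, which combines with $|f\ssub{0}|$ into the constant $F\ssub{0}$, from the solution's own contribution $B\int_0^{t-1}W(u)\dv{u}$ on $[0,t-1]$. Enlarging the upper limit of this latter integral from $t-1$ to $t$ (legitimate because $W \ge 0$) produces the Gronwall-type inequality
\[ W(t) \;\le\; F\ssub{0} + V(t) + B\int_0^t W(u)\dv{u}. \]

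The conclusion then follows by one application of the standard Gronwall lemma: since $F\ssub{0} + V(t)$ is non-decreasing in $t$, one obtains $W(t) \le [F\ssub{0} + V(t)]\exp\{Bt\}$, and dividing through by $e^{bt}$ gives the stated bound. The main bookkeeping obstacle is the seam at $t = 1$: for $0 \le t < 1$ the delay integral sees only $\phi$ and contains no $W$-contribution, so one must confirm that collapsing this range into the Gronwall inequality (by bounding an empty integral by zero) is consistent with the same final estimate. The integrating-factor manipulation, the change of variables, and the Gronwall step itself are otherwise routine.
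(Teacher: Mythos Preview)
Your argument is essentially identical to the paper's: the integrating factor $e^{bt}$, the substitution $u=s-1$ in the delay integral, the split at $u=0$, enlarging the upper limit from $t-1$ to $t$, and then the Gr\"onwall--Bellman inequality. One minor bookkeeping point (which the paper's own proof shares): after the substitution the initial-history piece is $B\int_{-1}^{0}e^{bu}|\phi(u)|\dv{u}$, not $c\int_{-1}^{0}e^{bu}|\phi(u)|\dv{u}$, so what the argument actually delivers is the stated estimate with $F\ssub{0}$ replaced by $|f\ssub{0}|+B\int_{-1}^{0} e^{bs}|\phi(s)|\dv{s}$; this discrepancy is harmless for every subsequent use of the theorem.
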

\begin{proof}
  Let us write the first equation in \eqref{non-hom} in the form
  \begin{equation*}
    \ddt[\exp\{bt\}f(t)] = \exp\{bt\}[v(t) - cf(t-1)],
  \end{equation*}
  integration of which yields
  \begin{align*}
   &[\exp\{bt\}f(t)] = f\ssub{0} - c\int_0^t\exp\{bs\}f(s-1)\dv{s}
                       + \int_0^t\exp\{bs\}v(s)\dv{s}\\
   &\qquad = f\ssub{0} - c\int_{-1}^0\exp\{bs\}\phi(s)\dv{s} 
                       -c\exp\{b\}\int_0^{t-1}\exp\{bs\}f(s)\dv{s}\\
                     &\qquad\qquad + \int_0^t\exp\{bs\}v(s)\dv{s}.
  \end{align*}
  With $F(t) := \exp\{bt\}|f(t)|$, 
we obtain the integral inequality
  \begin{equation*}
    F(t) \le F\ssub{0} + V(t) + B\int_0^{t}F(s)\dv{s}.
  \end{equation*}
  Since $V$ is monotonically increasing, the Gr\"onwall-Bellman
  inequality \cite{Bellman:1943}, \cite[Lemma 3.1]{Bellman:1963}
  applies, {\it i.e.}, $F(t) \le [F\ssub{0} + V(t)]\exp\{Bt\}$,
  and the proof is complete.
\end{proof}

This result shows that the solution of \eqref{non-hom} is unique and
depends continuously on the initial data.

\section{The linear homogeneous equation}
\label{Linear}

Our attention now turns to the linear equation \eqref{linzd} under the
initial condition
$f(t) = \phi(t) \text{ for } -1 \le t \le 0; f(0) = f\ssub{0}$. The
approach is by the Laplace transform defined as
$\hat{f}(s) := \int_0^\infty \exp\{-st\}f(t)\dv{t}$ for complex $s$
with positive real part. In fact, Thm. \ref{est_non-hom} with
$v \equiv 0$ shows that for $\Re s > B-b$ the Laplace transform of
\eqref{linzd} may be taken. The result is
\begin{equation}\label{Laplaced}
  \left.
    \begin{aligned}
      &h(s)\hat{f}(s) = H(s);\\
      &h(s) := s + b + c\exp\{-s\};\\
      &H(s) := f\ssub{0} - c\exp\{-s\}
      \int_{-1}^0\exp\{-st\}\phi(t)\dv{t}.
    \end{aligned}
  \right\}
\end{equation}

For the inversion of $\hat{f}$ we need to study the zeros of the
complex-valued function $h(s)$ for $s = x + iy$ ($i = \sqrt{-1}$) in
the complex plane. For this the real and imaginary parts of $h$ must 
vanish and we have the equations
\begin{align}
  &(x+b) + c\exp\{-x\}\cos y = 0;\label{i}\\
  &y - c\exp\{-x\}\sin y = 0,\label{ii}
\end{align}
both of which need to be satisfied.

As a first step we eliminate the trigonometric terms in \eqref{i},
\eqref{ii} to obtain
\begin{equation}\label{Curve^2}
  y^2 = c^2\exp\{-2x\} - (x+b)^2.
\end{equation}
This defines a curve on which the solution points $s = x + iy$ must lie,
but not every point on the curve is necessarily a solution of
\eqref{i} and \eqref{ii}. Moreover, the curve so obtained is only
defined for those $x$ for which the right of \eqref{Curve^2} is
non-negative. We investigate this question first.

\begin{lem}\label{bounds}
  With $B$ a positive constant:
  \begin{enumerate}
  \item[\textnormal{(a)}] The equation $B\exp\{-u\} = u$ has a unique
    positive solution $u\ssub{M}$ and $B/(1+B) \le u\ssub{M} < B$.
  \item[\textnormal{(b)}] If $B > e^{-1}$ the equation
    $B\exp\{u\} = u$ has no positive solution.
  \end{enumerate}
\end{lem}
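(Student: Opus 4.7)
For part (a), the plan is to analyze the auxiliary function $\varphi(u) := B\exp\{-u\} - u$ on $[0,\infty)$. One notes $\varphi(0) = B > 0$, $\varphi'(u) = -B\exp\{-u\} - 1 < 0$ everywhere, and $\varphi(u) \to -\infty$ as $u \to \infty$, so strict monotonicity plus the intermediate value theorem deliver a unique positive root $u_M$.

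The upper bound $u_M < B$ is immediate from the defining equation, since $u_M = B\exp\{-u_M\} < B$ whenever $u_M > 0$. For the lower bound $u_M \ge B/(1+B)$, I would exploit the monotonicity of $\varphi$: it suffices to verify $\varphi(B/(1+B)) \ge 0$, equivalently $\exp\{B/(1+B)\} \le 1+B$. Substituting $v = B/(1+B) \in [0,1)$, so that $1+B = 1/(1-v)$, this reduces to the inequality $(1-v)\exp\{v\} \le 1$ for $v \in [0,1)$. This is the one non-routine step, but it follows at once by differentiating $g(v) := (1-v)\exp\{v\}$: since $g(0)=1$ and $g'(v) = -v\exp\{v\} \le 0$, the function is non-increasing on $[0,1)$ and therefore bounded above by $1$.

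For part (b), I would work instead with $\psi(u) := B\exp\{u\} - u$ on $[0,\infty)$. The derivative $\psi'(u) = B\exp\{u\} - 1$ vanishes only at $u_\star := -\ln B$. Two cases arise. If $B \ge 1$, then $u_\star \le 0$, so $\psi$ is strictly increasing on $[0,\infty)$ with $\psi(0) = B > 0$, and no positive root exists. If $B < 1$, then $u_\star > 0$ is the global minimum on $[0,\infty)$, with minimum value $\psi(u_\star) = B\exp\{-\ln B\} + \ln B = 1 + \ln B$. Under the hypothesis $B > e^{-1}$, one has $\ln B > -1$, hence $\psi(u_\star) > 0$ and therefore $\psi > 0$ throughout $[0,\infty)$, ruling out any positive solution.

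I do not foresee a genuine obstacle. The only mildly delicate ingredient is the monotonicity of $(1-v)\exp\{v\}$ used in the lower bound of (a); everything else is standard monotonicity and IVT bookkeeping, so the write-up should be brief.
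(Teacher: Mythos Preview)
Your argument is correct. The paper's proof runs along the same lines but is more direct in both parts. For the lower bound in (a), instead of evaluating $\varphi$ at the candidate point and passing through the substitution $v=B/(1+B)$, the paper simply applies the inequality $\exp\{-u\}\ge 1-u$ at the fixed point itself, obtaining $u_M = B\exp\{-u_M\}\ge B(1-u_M)$ and hence $u_M\ge B/(1+B)$ in one line; your inequality $(1-v)e^v\le 1$ is of course equivalent to $e^{-v}\ge 1-v$, so the two routes meet at the same elementary fact. For (b), rather than minimising $\psi(u)=Be^{u}-u$ with a case split on whether $B\ge 1$, the paper rewrites the equation as $B=u\exp\{-u\}$ and invokes $u\exp\{-u\}\le e^{-1}$, which disposes of the matter without cases. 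Both presentations are entirely sound; the paper's is just shorter.
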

\begin{proof}
  Assertion (a): The existence, uniqueness, positivity and the upper
  bound of a solution is straightforward. To obtain the lower bound we
  note that $\exp\{-u\} \ge 1-u$ for $u>0$. Hence
  $u\ssub{M} = B\exp\{-u\ssub{M}\} \ge B(1-u\ssub{M})$ and the result
  follows.

  Assertion (b) follows from the inequality $u\exp\{-u\} \le e^{-1}$.
\end{proof}

In accordance with \eqref{Curve^2}, let
$
  k\ssup{(2)}\kern-3pt(x) := c^2\exp\{-2x\} - (x+b)^2
                  = \break[c\exp\{b\}]^2\exp\{-2(x+b)\} - (x+b)^2.
$
\begin{prop}\label{k-positivity}
  Let $B = c\exp\{b\}$ and $u\ssub{M}$ as in Lemma \ref{bounds}. Then

  \begin{enumerate}
   \item[\textnormal{(a)}] For $-b \le x \le x\ssub{M} := u\ssub{M} - b$,
     $k\ssup{(2)}\kern-3pt(x) \ge 0$ with equality only if $x =
     x\ssub{M}$.
   \item[\textnormal{(b)}] For $x+b < 0$ the function
     $k\ssup{(2)}\kern-3pt(x) > 0$ if $B > e^{-1}$.
  \end{enumerate}
\end{prop}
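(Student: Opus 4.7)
The whole statement reduces to a one-variable sign analysis after the shift $u := x+b$, which turns $k\ssup{(2)}\kern-3pt(x)$ into the difference of two squares
\begin{equation*}
 k\ssup{(2)}\kern-3pt(x) = B^{2}\exp\{-2u\} - u^{2} = \bigl(B\exp\{-u\} - |u|\bigr)\bigl(B\exp\{-u\} + |u|\bigr).
\end{equation*}
Since the second factor is strictly positive for every real $u$, the sign of $k\ssup{(2)}$ coincides with that of $\Delta(u) := B\exp\{-u\} - |u|$. Both assertions of the proposition therefore translate into sign statements for $\Delta$ on the two half-lines $u \ge 0$ and $u < 0$, and Lemma~\ref{bounds} is essentially already half of the work.

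For part (a) I would set $u \in [0,u\ssub{M}]$ and study $\Delta(u) = B\exp\{-u\} - u$. Its derivative $\Delta'(u) = -B\exp\{-u\} - 1$ is strictly negative, so $\Delta$ is strictly decreasing. At the endpoints we have $\Delta(0) = B > 0$ and $\Delta(u\ssub{M}) = 0$ by the defining equation of $u\ssub{M}$ in Lemma~\ref{bounds}(a). Hence $\Delta(u) > 0$ on $[0,u\ssub{M})$ and $\Delta(u\ssub{M}) = 0$; multiplied by the positive second factor this gives $k\ssup{(2)}\kern-3pt(x) \ge 0$ on $[-b,x\ssub{M}]$ with equality only at $x = x\ssub{M}$.

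For part (b) I would set $v := -u > 0$ and study $\psi(v) := B\exp\{v\} - v$, the goal being to show $\psi(v) > 0$ for all $v > 0$ whenever $B > e^{-1}$. If $B \ge 1$ then $\psi'(v) = B\exp\{v\} - 1 > 0$ on $v \ge 0$ and $\psi(0) = B > 0$, so $\psi > 0$ trivially. If $0 < B < 1$ the unique critical point is $v^{*} = -\ln B > 0$, the second derivative $B\exp\{v\}$ is positive, and the (global) minimum value is
\begin{equation*}
 \psi(v^{*}) = B\exp\{-\ln B\} - (-\ln B) = 1 + \ln B.
\end{equation*}
This is strictly positive precisely when $B > e^{-1}$, in agreement with the condition in Lemma~\ref{bounds}(b). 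Consequently $\Delta(u) > 0$ and hence $k\ssup{(2)}\kern-3pt(x) > 0$ for all $x$ with $x+b < 0$.

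\textbf{Expected obstacle.} There is no substantial obstacle; the proof is essentially bookkeeping once the substitution $u = x+b$ and the factorisation as a difference of squares are made. The only point that warrants care is the case split $B \ge 1$ versus $0 < B < 1$ in part (b), so that $v^{*} = -\ln B$ is interpreted correctly (it lies in $(0,\infty)$ only in the latter range), and the explicit identification of the minimum value $1 + \ln B$ that makes the threshold $B > e^{-1}$ appear naturally.
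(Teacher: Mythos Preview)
Your proof is correct and follows essentially the same route as the paper: the substitution $u=x+b$ together with the difference-of-squares factorisation $B^{2}\exp\{-2u\}-u^{2}=(B\exp\{-u\}+|u|)(B\exp\{-u\}-|u|)$ is exactly what the paper does. The only cosmetic difference is that the paper outsources the sign analysis of the second factor to Lemma~\ref{bounds} (using $u\exp\{-u\}\le e^{-1}$ for part~(b)), whereas you re-derive it inline via monotonicity and the explicit minimum value $1+\ln B$; both arguments are equivalent.
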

\begin{proof}
  For statement (a), let $u = x+b \ge 0$. Then
  $k\ssup{(2)}\kern-3pt(x) = B^2\exp\{-2u\} - u^2 =
    [B\exp\{-u\}+u][B\exp\{-u\}-u]$ which, by Lemma \ref{bounds}(a) is
    non-negative for $u \le u\ssub{M}$. To prove (b) let $u =
    -(x+b) > 0$. Then $k\ssup{(2)}\kern-3pt(x) =  B^2\exp\{2u\} - u^2$. The
    result follows from Lemma \ref{bounds}(b).
\end{proof}

It is of importance to introduce the parameter
$\rho := \gamma/\beta = \gamma\lsup{*}/\beta\ssup{*}$. It corresponds
to the reciprocal of the {\it basic reproduction number}
$R\ssub{0} = \beta/\gamma$ in the classical SIR model. This leads to
the relation $c = \rho b$. From now on we make an assumption stronger
than suggested by Lemma \ref{bounds}(b) namely
\begin{equation}\label{stronger}
B = c\exp\{b\} = \rho b \exp\{b\} > 1.
\end{equation}
Under this assumption the function $k\ssup{(2)}\kern-3pt(x) \ge 0$ for
$x \le x\ssub{M}$ so that $k\ssup{(2)}\kern-3pt(x)$ can
actually be considered  a square. We define the function $k$ by
\begin{align}
  k^2(x) := k\ssup{(2)}\kern-3pt(x) &= c^2\exp\{-2x\} -(x+b)^2\notag\\
    &\qquad= B^2\exp\{-2(x+b)\} -(x+b)^2
      \notag\\
    &\qquad= \rho^2 b^2 \exp\{-2x\} - (x+b)^2;\quad x \le x\ssub{M}.
      \label{def_k}
\end{align}
The curve $K$ defined in the complex plane by $s = x \pm ik(x)$ will
be our next concern. Because of symmetry we shall deal mainly with the
positive branch which will also be referred to as $K$.

\begin{prop}\label{Properties}
  The curve $K$ has the following properties:
  \begin{enumerate}
    \item[\textnormal{(a)}] $k(x) \to \infty$ as $x \to -\infty$.
    \item[\textnormal{(b)}] At $x < x\ssub{M}$ the tangent is negative. 
    \item[\textnormal(c)] $x\ssub{M}$ is positive if and only if
      $\gamma > \beta$, \text{i.e.,} $\rho > 1$.
  \end{enumerate}  
\end{prop}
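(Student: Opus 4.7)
My plan is to treat the three assertions separately, working directly from the definition $k^2(x) = c^2\exp\{-2x\} - (x+b)^2 = B^2\exp\{-2(x+b)\} - (x+b)^2$ and using the standing assumption $B > 1$ from \eqref{stronger} together with Lemma \ref{bounds} to control the sign of derivatives.

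For (a), I observe that as $x \to -\infty$ the term $c^2\exp\{-2x\}$ blows up exponentially while $(x+b)^2$ grows only polynomially, so $k^2(x) \to \infty$ and hence $k(x) \to \infty$. This is essentially a one-line comparison.

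For (b), I differentiate $k^2(x)$ implicitly to obtain
\begin{equation*}
  2k(x)k'(x) = -2c^2\exp\{-2x\} - 2(x+b).
\end{equation*}
Since $k(x) > 0$ on $x < x\ssub{M}$ by Proposition \ref{k-positivity}, it suffices to show that $c^2\exp\{-2x\} + (x+b) > 0$. When $x+b \ge 0$ this is immediate. When $x+b < 0$, I set $u = -(x+b) > 0$ so that $c^2\exp\{-2x\} = B^2\exp\{2u\}$, and I must show $B^2\exp\{2u\} > u$. Using $B > 1$, for $u \le 1$ the left side exceeds $B^2 > 1 \ge u$, while for $u > 1$ the inequality $e^{2u} > u$ gives $B^2 e^{2u} > u$. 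This yields $k'(x) < 0$.

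For (c), I note $x\ssub{M} = u\ssub{M} - b > 0 \iff u\ssub{M} > b$. Because the function $u \mapsto B\exp\{-u\} - u$ is strictly decreasing and vanishes at the unique positive root $u\ssub{M}$ (Lemma \ref{bounds}(a)), we have $u\ssub{M} > b$ iff $B\exp\{-b\} > b$, i.e.\ $c > b$. Translating via $b = \tilde\beta\beta\ssup{*}$, $c = \tilde\beta\gamma\lsup{*}$ and $\rho = \gamma/\beta$, this is equivalent to $\gamma\lsup{*} > \beta\ssup{*}$, that is, $\gamma > \beta$ or $\rho > 1$.

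The only step that is more than bookkeeping is the sign verification in (b) in the regime $x + b < 0$; the rest reduces to monotonicity or order-of-growth comparisons. I would expect the exposition to hinge on making the role of the assumption $B > 1$ visible precisely at that point, so it is clear why the hypothesis \eqref{stronger} (stronger than Lemma \ref{bounds}(b)) is needed for the tangent to remain negative throughout the whole domain $x < x\ssub{M}$ and not only for $x$ close to $x\ssub{M}$.
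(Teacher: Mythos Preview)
Your proof is correct and follows essentially the same route as the paper's: the same implicit differentiation for (b) with the sign analysis split into the cases $x+b \ge 0$ and $x+b < 0$, and the same reduction of (c) to the inequality $c > b$. The only cosmetic differences are that the paper bounds $B^2\exp\{2u\} - u$ via the linear Taylor estimate $\exp\{2u\} \ge 1 + 2u$ rather than your case split on $u \lessgtr 1$, and for (c) it evaluates $k^2(0) = c^2 - b^2$ and then invokes the monotonicity just proved in (b), whereas you argue directly from the defining equation $B\exp\{-u\ssub{M}\} = u\ssub{M}$ without relying on (b).
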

\begin{proof}
  The assertion (a) is, by \eqref{def_k}, straightforward.
  
  From \eqref{def_k} we see that $k(x)k'(x) =
  -[B^2\exp\{-2(x+b)\} + (x+b)]$. If $x+b > 0$, the term in brackets
  is positive. Since $k(x) > 0$ it follows that $k'(x) < 0$ for such
  $x$. If $x+b < 0$ let $u = -(x+b)$ and it follows that
  $B^2\exp\{-2(x+b)\} + (x+b) = B^2\exp\{2u\} - u > B^2(1+2u)-u >
  1+u > 0$, by the assumption \eqref{stronger}. Thus (b) is
  established.

  To prove (c) we notice that $k\ssup{(2)}\kern-3pt(0) = c^2 -
  b^2$. Hence if $c > b$, the point $(0,[c^2-b^2]^{1/2})$ is on the
  curve $K$. From (b) we see that $x\ssub{M}$, where $k(x) = 0$, must
  be positive. But $c > b$ means the same as $\gamma > \beta$. This
  argument can also be reversed.
\end{proof}

From what we have established so far, the following is significant:
\begin{enumerate}
\item[A.] Under the assumption \eqref{stronger} the zeros of the
  function $h$ are all to the left of the vertical line $x = B$.

\item[B.] If $\beta \ge \gamma$ the zeros in question are to the left of
  the line $x = \delta$ for arbitrary $\delta > 0$. In fact, if $\beta
  = \gamma$, $x\ssub{M} = 0$.
\end{enumerate}
Thus inversion of the Laplace transform becomes a distinct possibility
if the zeros on the curve $K$ can be located. For that we need to
obtain information about the points $(x,k(x))$ for which the equations
\eqref{i} and \eqref{ii} are actually satisfied. Towards this we
eliminate the exponential terms from these equations to obtain the
relation $y + (x+b)\tan y = 0$.
additional to \eqref{Curve^2}. This defines another (multi-branched)
curve $L$ which has to meet the curve $K$ in certain points. The
function to be considered is
\begin{equation}\label{Add_curve}
  \ell(x) := -(x+b)\tan k(x).
\end{equation}
The zeros we look for will occur at points where $\ell(x) =
k(x)$.  It is seen from Prop. \ref{Properties}(a) that this will
happen at points where $k(x)$ is near $(n+1/2)\pi; n =
0,1,2,\dots$ and this results in a discrete sequence of zeros of
$h$, each of the form $s = x + ik(x)$.

We next examine the case $x+b \ge 0$ where zeros with non-negative
real part may occur. According to Prop. \ref{Properties}(c) this can
only happen if $\rho > 1$. It will be necessary to indicate the
dependence of functions and derived parameters on
$b = \tilde{\beta}\beta\ssup{*} = \beta\ssup{*}\beta\tau$ and
$\rho = \gamma/\beta$. Thus we write $B = B(b,\rho)$,
$k(x) = k(x;b,\rho)$ in accordance with \eqref{stronger},
\eqref{def_k}. Also note that $x\ssub{M}$ also depends on $b$ and
$\rho$. 

It is convenient to consider equation \eqref{i} on the curve $K$
instead of \eqref{Add_curve}. This yields, after some manipulation,
\begin{equation}\label{Alt_eqn}
  r(x;b,\rho) := (x+b)\exp\{x+b\} + B(b,\rho)\cos k(x;b,\rho) = 0.
\end{equation}
To investigate this equation we consider
$r(-b;b,\rho) = B(b,\rho)\cos k(-b;b,\rho) = B(b,\rho)\cos B(b,\rho)$
as can be seen from \eqref{def_k}. Thus, if
$3\pi/2 > B(b,\rho) > \pi/2 > 1$, $r(-b;b,\rho) < 0$.  Also at
$x = x\ssub{M}$, we find that
$r(x\ssub{M};b,\rho) = (x\ssub{M}+b)\exp\{x\ssub{M}+b\} + B(b,\rho) >
0$ since $k(x\ssub{M};b,\rho) = 0$. We conclude that $r(x;b,\rho)$ has
zeros in the interval $(-b,x\ssub{M})$. These zeros may still be
negative.

We can, however,
find a very interesting value of $b$ by noticing that $k(0;b,\rho) =
(\rho^2-1)^{-1/2}b$ and  
\begin{equation}\label{at_zero}
  r(0;b,\rho) = b\exp\{b\}[1 + \rho\cos([\rho^2-1]^{1/2}b)].
\end{equation}
Thus, if we take $b$ as
\begin{equation*}
  b\ssub{0} = b\ssub{0}(\rho) := (\rho^2 - 1)^{-1/2} \arccos(-\rho^{-1})
   = (\rho^2 - 1)^{-1/2}[\pi - \arccos(\rho^{-1})],
\end{equation*}
it is seen that $r(0;b\ssub{0},\rho ) = 0$. Moreover, since
$\rho > 1$, $0 < \arccos(\rho^{-1}) < \pi/2$ so that
$\pi/2 < (\rho^2-1)^{1/2}b\ssub{0} < \pi$. Thus,
$\rho b\ssub{0} > \pi/2$ and hence
$B(b\ssub{0},\rho) > (\pi/2)\exp\{b\ssub{0}\} > \pi/2$. For this
particular choice of $b$, $x=0$ therefore is a zero of
$r$. Corresponding to $b\ssub{0}$ is a critical value of $\tau$:
\begin{equation}\label{crit-tau}
  \tau\ssub{0} = {b\ssub{0}}/{\beta\beta\ssup{*}}
    = (1+\rho)b\ssub{0}(\rho)/\beta.
\end{equation}

A positive zero may be contrived by incrementing $b\ssub{0}$ without
changing $\beta$ and $\gamma$ ({\it i.e.}, $\rho$ fixed). This amounts
to letting $\tau = \tau\ssub{0} +\sigma$; $\sigma >0$. Then 
$b = b\ssub{0} + \beta\ssup{*}\beta\sigma$. From
\eqref{at_zero} we see that if
\begin{equation*}
  \beta\sigma \le \left[\frac{\rho+1}{\rho-1}\right]^{1/2}
    [\pi-(\rho^2-1)^{1/2}b\ssub{0}],
\end{equation*}
$r(0;b) < 0$ which means that a positive zero of \eqref{Alt_eqn} 
exists.

The zeros of $h$ are simple. Indeed, if $s\lsup{*}$ is a zero, then
$s\lsup{*} + b = -c\exp\{-s\lsup{*}\}$ and
$h'(s\lsup{*}) = 1 - c\exp\{-s\lsup{*}\} = 1 + b + s\lsup{*}$ which
cannot be zero. Also, when the negative branch of $K$ namely,
$s = x - ik(x)$ is considered, we see from \eqref{Add_curve} that if
$s = x +ik(x)$ is a zero on the positive branch, its complex conjugate
$\overline{s} = x - ik(x)$ is a zero on the negative branch.

Positioning of the zeros of $h$ is illustrated in
Fig.\ref{Zeros}. Increasing $\tau$ could shift $x\ssub{0}$ to `the
other side'. We order the roots of $h$ according to their real parts:
$x\ssub{n} > x\ssub{n+1}; n=0,1,\dots$ and note that
$x\ssub{n} \to -\infty$ as $n \to \infty$.
\begin{figure}[H]
 \centering
 \includegraphics[width=0.6\textwidth]{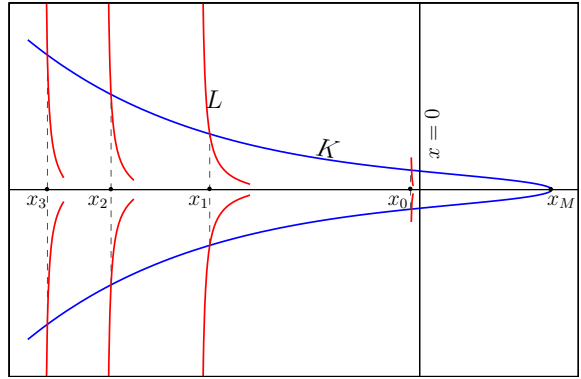} 
 \caption{{\footnotesize Zeros of $h(s)$}}
 \label{Zeros}
\end{figure}

We summarize the findings above:
\begin{thm}\label{About_roots}
  In terms of the parameters $b = \tilde{\beta}\beta\ssup{*}$,
  $\rho = \gamma/\beta$ and $B(b) = \rho b\exp\{b\}$ the following is
  known about the zeros of the function
  $h(s) = s + b[1+\rho\exp\{-s\}]$:
  
  \begin{enumerate}
  \item[\textnormal{(a)}] All zeros are simple.
  \item[\textnormal{(b)}] There is a constant $p > 0$ such that all
    zeros occur to the left of the contour
    $C = \{s = p + iy: y \in \mathbb{R}\}$.
  \item[\textnormal{(c)}] If $B(b,\rho) > 1$ there is a decreasing
    unbounded sequence
    $\{x\ssub{n} < x\ssub{M}: n = 0, 1, 2,\dots\}$ such that
    $s\ssub{n} = x\ssub{n} + ik(x\ssub{n};b)$ are zeros. The complex
    conjugates are also roots.
  \item[\textnormal{(d)}]If $\rho > 1$ and $3\pi/2 > B(b,\rho) > \pi/2$,
    a finite number zeros with non-negative real part can occur. If
    $\rho \le 1$ there are only zeros with negative real part. 
  \end{enumerate}
\end{thm}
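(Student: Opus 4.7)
The result is essentially a consolidation of structural facts about $h(s)=s+b+c\exp\{-s\}$ developed in the preceding pages, so my plan is to verify each item by invoking the relevant lemma or proposition---or by a direct one-line computation---rather than to prove anything essentially new.

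For (a), at any zero $s^*$ the defining equation gives $c\exp\{-s^*\}=-(s^*+b)$, so $h'(s^*)=1-c\exp\{-s^*\}=1+b+s^*$. Vanishing of $h'(s^*)$ would force $s^*=-(1+b)\in\mathbb R$ and then $1=c\exp\{1+b\}=Be$, contradicting the standing assumption $B>1$. For (b), squaring and adding equations \eqref{i}, \eqref{ii} places every zero on the curve $K$ with $y^2=k\ssup{(2)}(x)$; Proposition \ref{k-positivity} together with Lemma \ref{bounds}(a) give $k\ssup{(2)}(x)<0$ for $x>x\ssub{M}$, so every zero satisfies $\Re s\le x\ssub{M}$. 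Taking $p:=\max(x\ssub{M},0)+\varepsilon$ with any $\varepsilon>0$ supplies the required positive constant.

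For (c), I return to the relation $\ell(x):=-(x+b)\tan k(x)=k(x)$ obtained by eliminating the exponentials from \eqref{i}, \eqref{ii}. Proposition \ref{Properties}(a) gives $k(x)\to\infty$ as $x\to-\infty$ and Proposition \ref{Properties}(b) shows that $k$ is strictly monotone on $(-\infty,x\ssub{M})$, so there is a sequence $\xi\ssub{n}\searrow-\infty$ with $k(\xi\ssub{n})=(n+\tfrac12)\pi$. On each interval $(\xi\ssub{n+1},\xi\ssub{n})$, $k(x)$ sweeps continuously through $((n+\tfrac12)\pi,(n+\tfrac32)\pi)$ and $\tan k(x)$ accordingly sweeps through $(-\infty,+\infty)$. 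Since $-(x+b)>0$ throughout (for $n$ large), $\ell(x)-k(x)$ is continuous and sweeps from $-\infty$ to $+\infty$, so the intermediate value theorem supplies a zero $x\ssub{n}\in(\xi\ssub{n+1},\xi\ssub{n})$. The $x\ssub{n}$ are automatically distinct (disjoint intervals) and satisfy $x\ssub{n}\to-\infty$. Complex conjugates are also zeros because $h$ has real coefficients.

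For (d), I split on the sign of $\rho-1$. If $\rho>1$ then $x\ssub{M}>0$ by Proposition \ref{Properties}(c), and the explicit construction at $b=b\ssub{0}(\rho)$ together with the perturbation argument immediately preceding the theorem supplies at least one zero with non-negative real part; finiteness follows because $\{0\le x\le x\ssub{M}\}\cap K$ is a compact arc on which the real-analytic equation $\ell(x)=k(x)$ admits only finitely many solutions. If $\rho\le 1$ then $x\ssub{M}\le 0$; in the borderline case $\rho=1$ one has $x\ssub{M}=0$, but $h(0)=b(1+\rho)>0$ rules out $s=0$ as a zero, so all zeros have strictly negative real part. The step I regard as the main obstacle is the sign-change book-keeping in (c): one has to be sure that at least one root really appears in each of the intervals $(\xi\ssub{n+1},\xi\ssub{n})$ and that nothing is double-counted or missed; everything else collapses to a direct appeal to Propositions \ref{k-positivity} and \ref{Properties} together with one short calculation.
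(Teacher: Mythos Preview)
Your proposal mirrors the paper's treatment closely: Theorem~\ref{About_roots} is announced there as a summary of the Section~\ref{Linear} analysis, and you correctly invoke the relevant propositions for each item. Your handling of (a) even fills a gap the paper leaves, since the paper asserts that $1+b+s^*$ ``cannot be zero'' without checking that $s^*=-(1+b)$ fails $h(s^*)=0$; your observation $h(-(1+b))=-1+Be>0$ under the standing hypothesis $B>1$ completes that step.

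One point in (c) needs more care. The relation $\ell(x)=k(x)$ is obtained by \emph{eliminating} $ce^{-x}$ from \eqref{i}--\eqref{ii}, and this elimination discards a sign: a point on $K$ satisfying $\ell(x)=k(x)$ is a genuine zero of $h$ only when $\sin k(x)>0$; if $\sin k(x)<0$ one gets $y=-ce^{-x}\sin y$ instead of \eqref{ii}, a spurious solution. On each of your intervals $(\xi_{n+1},\xi_n)$ the quantity $k(x)$ sweeps through $((n+\tfrac12)\pi,(n+\tfrac32)\pi)$, and $\sin k(x)$ changes sign exactly once there, so the intermediate-value argument as written may land on a spurious crossing rather than on a zero of $h$. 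The paper is equally informal at this juncture---it simply asserts that zeros ``will happen at points where $k(x)$ is near $(n+1/2)\pi$''---so you have not omitted anything the paper actually supplies; but the ``sign-change book-keeping'' you yourself flagged as the main obstacle is slightly more delicate than your write-up allows. A clean repair is to run the IVT on the sub-interval where $\sin k(x)>0$, or equivalently to work directly with \eqref{ii} restricted to $K$ (which forces the correct sign automatically) rather than with the derived quantity $\ell$.
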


\section{Laplace inversion}
\label{Inversion}

The significance of Thm. \ref{About_roots} is seen by considering the
Mellin inversion of $\hat{f}$. From
\eqref{Laplaced}, $\hat{f} = H(s)/h(s)$ and the inversion yields the
representation
\begin{equation*}
  f(t) = \frac{1}{2\pi i} \int_C \frac{H(s)}{h(s)} \exp\{st\}\dv{s}.
\end{equation*}
From the residue theorem the integral on the right equals $2\pi i$
times the sum of the residues of the integrand at its poles, and the
poles are precisely the zeros of $h$. Now, let
$s\ssub{n} = x\ssub{n} + iy\ssub{n} = x\ssub{n} + ik(x\ssub{n})$ be one
of the zeros of $h$ obtained from the positive branch of the curve $K$
as discussed so far. To calculate the residue, we consider the Taylor
expansion of $h$ about $s\ssub{n}$. By taking into account that
$s\ssub{n} + b = -c\exp\{-s\ssub{n}\}$, $h'(s) = 1 + b -c\exp\{-s\}$
and $h\ssup{(j)}(s) = (-1)^jc\exp\{-s\}$ for $j \ge 2$, we obtain
\begin{equation*}
  h(s) = (s-s\ssub{n})[1+b+s\ssub{n}
    + (b+s\ssub{n}) \sum_{j=2}^\infty (-1)^{j-1}
      \tfrac{(s-s\ssub{n})^{j-1}}{j!}].
\end{equation*}
The pole at $s=s\ssub{n}$ therefore has the residue
\begin{equation*}
  R(s\ssub{n}) = \tfrac{H(s\ssub{n})}{1+b+s\ssub{n}}
    \exp\{s\ssub{n} t\}.
\end{equation*}
However, the negative branch of the curve $K$ also contributes. In
fact, if $s\ssub{n} = x\ssub{n} + iy\ssub{n}$ with
$y\ssub{n} = k(x\ssub{n})$, its complex conjugate $\overline{s}\ssub{n}$ is
also a zero of $h$. The residue at this pole turns out to be the
complex conjugate of $R(s\ssub{n})$. If we write
$H(s\ssub{n})/(1+b+s\ssub{n}) = |H(s\ssub{n})|/|1+b+s\ssub{n}|
\exp\{i\sigma\ssub{n}\}$, the two residues together contribute to the
solution by the term
\begin{equation}\label{term}
  f\ssub{n}(t) = 2\frac{|H(s\ssub{n})|}{|1+b+s\ssub{n}|}
  \exp\{x\ssub{n}t\}\cos(y\ssub{n}t+\sigma\ssub{n}).
\end{equation}
We immediately note that, since $y\ssub{n} > 0$, there is undulation
in every such term. Also, if $x\ssub{n} < 0$ the term decays to zero
exponentially. This is not the case when $x\ssub{n} \ge 0$. Thus the
dominant term in the solution will correspond to
$s\ssub{0} = x\ssub{0} + ik(x\ssub{0})$.

We conclude this section by estimating the period $T$ of the principal
mode of undulation, namely that associated with $x\ssub{0}$. The
`angular velocity' is $y\ssub{0} = k(x\ssub{0})$ and the scaled period
$T\ssub{\tau} = 2\pi/y\ssub{0}$ so that in unscaled time,
\begin{equation}\label{period}
  T = T\ssub{\tau}\tau = \left(\frac{2\pi}{y\ssub{0}}\right)\tau.
\end{equation}
The value of $x\ssub{0}$ can be obtained numerically by (carefully)
solving \eqref{Alt_eqn} for $x$, making sure that the obtained value
is the largest.

\section{Estimates}
\label{estimates}

The aim of this section is to obtain information about long-term
behaviour of the solution of the homogeneous equation
$f'(t) + bf(t) +cf(t-1) = 0$ under the initial condition
$f(t) = \phi(t)$ for $t \in [-1,0]; f(0) = f\ssub{0}$. To begin with
we notice that, at least formally,
$f(t) = \sum_{n=0}^\infty f\ssub{n}(t)$ with the terms $f\ssub{n}$
given by \eqref{term}. Since $x\ssub{n} \to -\infty$, there is a
smallest $m \ge 0$ such that $x\ssub{m} < -b$. With is in mind, we
define the (possibly) {\it principal part} of the solution as
$f\ssub{pr}(t) := \sum_{n=0}^m f\ssub{n}(t)$ and the {\it remainder}
as $f\ssub{rm}(t) := \sum_{n=m+1}^\infty f\ssub{n}(t)$ so that
$f(t) = f\ssub{pr}(t) + f\ssub{rm}(t)$. We obtain estimates for the
two components under the assumption that $x\ssub{0} < 0$ which means
that all $x\ssub{n}$ are negative.

First to be considered is the coefficient
$A\ssub{n} := 2|H(s\ssub{n})|/|1 + b + s\ssub{n}|$ in \eqref{term}. We
introduce the symbolism $\norm{\phi} := \sup_{t\in [-1,0]} |\phi(t)|$
to obtain from \eqref{Laplaced}
\begin{align}
  |H(s\ssub{n})| &\le |f\ssub{0}| + c\exp\{-x\ssub{n}\}\norm{\phi}
                  \int_{-1}^0 \exp\{-x\ssub{n}t\}\dv{t}\notag\\
                &\qquad\qquad < |f\ssub{0}| + c 
                  |x\ssub{n}|^{-1}\norm{\phi}\exp\{|x\ssub{n}|\}.
                  \label{H_less}
\end{align}
Also,
$|1 + b + s\ssub{n}|^2 = y^2 + (x\ssub{n}+b)^2 + 2(x\ssub{n}+b) + 1 >
B^2\exp\{-2(x\ssub{n}+b)\} + 2(x\ssub{n}+b)$ as can be seen from
\eqref{def_k}. We therefore have
\begin{equation}\label{denom_1}
  |1+b+s\ssub{n}| > c \exp\{|x\ssub{n}|\} \text{ if }x\ssub{n}+b \ge 0.
\end{equation}
Cases where $x\ssub{n}+b < 0$ are treated differently. We
(temporarily) set $u\ssub{n} = - (x\ssub{n}+b) > 0$ to obtain $|1 + b
+ s\ssub{n}|^2 > B^2\exp\{u\ssub{n}\}[1 -
2B^{-2}u\ssub{n}\exp\{-2u\ssub{n}\}]$. From the inequality
$2u\exp\{-2u\} \le e^{-1}$ we obtain
\begin{equation}\label{denom_2}
  |1+b+s\ssub{n}| > cE \exp\{|x\ssub{n}|\} \text{ if }x\ssub{n}+b < 0.
\end{equation}
with $E^2 = 1 - B^{-2}e^{-1}$. Combination of \eqref{H_less} with
\eqref{denom_1} and \eqref{denom_2} yields
\begin{align}
  & \tfrac{1}{2}A\ssub{n} < (c|x\ssub{0}|)^{-1}\exp\{-|x\ssub{n}|\}|f\ssub{0}| +
    \norm{\phi} \text{ if } x\ssub{n}+b \ge 0;\label{case_i}\\
  &\tfrac{1}{2}A\ssub{n} < E^{-1}[c^{-1}|f\ssub{0}|\exp\{-|x\ssub{n}|\} + b^{-1}
    \norm{\phi}] \text{ if } x\ssub{n}+b < 0.\label{case_ii}
\end{align}
Here use have been made of the inequalities $|x\ssub{n}| \ge
|x\ssub{0}|$ and $|x\ssub{n}| > b$ in the two different cases.

From \eqref{case_i} and \eqref{term} it is seen that
$|f\ssub{pr}(t)| < C\ssub{pr}\exp\{-|x\ssub{0}|t\}$ with $C\ssub{pr}$
a positive constant. The infinite series $f\ssub{rm}(t)$ needs more
attention. For this it is necessary to obtain information about the
behaviour of $x\ssub{n}$ for $n \ge m+1$.  Our arguments will hinge on
the equations \eqref{i} and \eqref{Curve^2} expressed in the form
$\cos y\ssub{n} = -B^{-1}(x\ssub{n}+b)\exp\{x\ssub{n}+b\}$ and
$y\ssub{n}^2 = B^2\exp\{-2(x\ssub{n}+b)\} - (x\ssub{n}+b)^2$. Since
$u\ssub{n} := -(x\ssub{n}+b) > 0$ the equations are
\begin{align}
  &\cos y\ssub{n} = B^{-1}u\ssub{n}\exp\{-u\ssub{n}\};\label{I}\\
  &y\ssub{n}^2 = B^2 \exp\{2u\ssub{n}\} - u\ssub{n}^2.\label{II}
\end{align}
The constant $\epsilon := [\arccos(B^{-1}e^{-1})]/\pi < 1/2$ will
provide some clarity. In \S\ref{Linear} it is suggested that
$y\ssub{n}$ should be near $y\lsup*\ssub{n} := (n+1/2)\pi$, and this
we shall make more precise.

From \eqref{I} we see that
$0 < \cos y\ssub{n} < B^{-1}e^{-1} = \cos(\epsilon\pi)$. Standard
trigonometry (even a good sketch) shows that these inequalities can
only be satisfied by $y\ssub{n}$ in the intervals
$(y\lsup*\ssub{n},y\lsup*\ssub{n} + \epsilon\pi]$ for $n$ odd and
$[y\lsup*\ssub{n} - \epsilon\pi,y\lsup*\ssub{n})$ for $n$ even.
The endpoints $y\ssub{n}\lsup*$ are excluded since
$\cos y\ssub{n}\lsup* = 0$.

\begin{thm}\label{estimates_x}
  If $x\ssub{n} + b <0$ then
  \begin{equation*}
    \ln[\exp\{-(b+1)\}/(n+1)\pi] < x\ssub{n} < \ln[c/n\pi].
  \end{equation*}
\end{thm}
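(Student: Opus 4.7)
The plan is to work with the reformulated system \eqref{I}--\eqref{II} expressed in terms of $u_n := -(x_n+b)>0$, so that $B^{2}e^{2u_n}=y_n^{2}+u_n^{2}$. Since both bounds on $x_n$ translate directly into two-sided bounds on $u_n$, the task reduces to sandwiching $u_n$ between $\ln(n\pi)-\ln B$ (for the upper bound on $x_n$) and $1+\ln((n+1)\pi)$ (for the lower bound). Throughout I will exploit the localization of $y_n$ established just before the theorem: because $\epsilon<1/2$, the discussion of the intervals containing $y_n$ forces $n\pi<y_n<(n+1)\pi$.

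For the upper bound on $x_n$, the step is short. Equation \eqref{II} implies $B^{2}e^{2u_n}=y_n^{2}+u_n^{2}\ge y_n^{2}>(n\pi)^{2}$, so $Be^{u_n}>n\pi$, i.e., $u_n>\ln(n\pi)-\ln B$. Since $\ln B=\ln c+b$, this gives $x_n=-b-u_n<\ln c-\ln(n\pi)=\ln[c/n\pi]$, as desired.

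The lower bound on $x_n$ is the delicate part. I plan to first establish the auxiliary inequality
\[
u_n\le y_n,
\]
which is not immediate from \eqref{II} alone. To prove it I argue by contradiction: if $u_n>y_n$ then $y_n^{2}+u_n^{2}<2u_n^{2}$, so $Be^{u_n}<\sqrt{2}\,u_n$. On the other hand, the elementary calculus inequality $e^{u}>\sqrt{2}\,u$ for all $u>0$ (the function $u\mapsto e^{u}-\sqrt{2}u$ attains its minimum at $u=\tfrac12\ln 2$ with positive value $\sqrt{2}(1-\tfrac12\ln 2)$), together with the standing assumption $B>1$ from \eqref{stronger}, yields $Be^{u_n}>e^{u_n}>\sqrt{2}u_n$, a contradiction. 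This is the only genuinely non-trivial step and is, I expect, where the main obstacle lies.

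With $u_n\le y_n$ in hand, \eqref{II} gives $Be^{u_n}=\sqrt{y_n^{2}+u_n^{2}}\le\sqrt{2}\,y_n<\sqrt{2}(n+1)\pi$, hence
\[
u_n<\tfrac{1}{2}\ln 2+\ln((n+1)\pi)-\ln B<1+\ln((n+1)\pi),
\]
where the last inequality uses $\tfrac12\ln 2<1$ and $\ln B>0$. Translating back, $x_n=-b-u_n>-(b+1)-\ln((n+1)\pi)=\ln[\exp\{-(b+1)\}/(n+1)\pi]$, completing the proof.
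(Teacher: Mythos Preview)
Your proof is correct, but it follows a different path from the paper's. The paper combines \eqref{I} and \eqref{II} to eliminate $u_n$ and obtain the explicit formula $x_n=\ln[c|\sin y_n|/y_n]$; the two-sided bound then comes from the trigonometric estimate $B^{-1}e^{-1}=\cos(\epsilon\pi)\le|\sin y_n|<1$, which uses the fine localisation of $y_n$ within the $\epsilon$-window around $(n+\tfrac12)\pi$. You, by contrast, stay with \eqref{II} in the form $B^{2}e^{2u_n}=y_n^{2}+u_n^{2}$ and use only the coarse bracket $n\pi<y_n<(n+1)\pi$; the upper bound on $x_n$ is immediate, and for the lower bound you introduce the auxiliary inequality $u_n\le y_n$, proved from the elementary estimate $e^{u}>\sqrt{2}\,u$ together with $B>1$. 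The paper's route yields an exact expression for $x_n$ that could in principle be sharpened further, while your route is more self-contained, avoids the $|\sin y_n|$ estimate altogether, and makes clear that the lower bound needs nothing finer than $y_n<(n+1)\pi$ and the standing hypothesis \eqref{stronger}.
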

\begin{proof}
  From \eqref{II} we see that
  $y\ssub{n}^2\exp\{-2u\ssub{n}\} =
  B^2[1-(B^{-1}u\ssub{n}\exp\{-u\ssub{n}\})^2] = B^2[1 -
  \cos^2y\ssub{n}] = B^2\sin^2y\ssub{n}$ (having used \eqref{I}
  again). This, in turn gives (after some manipulation) 
  $x\ssub{n} = \ln[c|\sin y\ssub{n}|/y\ssub{n}]$. Careful
  consideration of the cases $n$ even/odd leads to $B^{-1}e^{-1} =
  \cos\epsilon\pi \le |\sin y\ssub{n}| < 1$ and it follows that
  $
    -\ln[\exp\{b+1\}y\ssub{n}\} \le x\ssub{n} \le \ln[c/y\ssub{n}].
  $
  Since $0 < \epsilon < 1/2$ we have
  $y\lsup*\ssub{n} - \epsilon\pi > n\pi$ and
  $y\lsup*\ssub{n} + \epsilon\pi < (n+1)\pi$. It follows that
  $n\pi < y\ssub{n} < (n+1)\pi$ regardless of the parity of $n$.
\end{proof}

The inequality \eqref{case_i} may now be employed to
estimate $f\ssub{n}(t)$ in \eqref{term} when $n \le m$. This results in
\begin{equation}\label{est_3}
  |f\ssub{n}(t)| <
    2[(c|x\ssub{0})^{-1}|f\ssub{0}|\exp\{-|x\ssub{n}|\}
    + \norm{\phi}] \exp\{-|x\ssub{n}|t\} \text{ if }n\le m.
\end{equation}
For the case $n > m$ we apply \eqref{case_ii} together with
Thm. \ref{estimates_x} to obtain
\begin{align}
  |f\ssub{n}(t)| &<
    2E^{-1}\left[c^{-1}|f\ssub{0}|\left(\frac{c}{\pi}\right)^{t+1}
                   \left(\frac{1}{n}\right)^{t+1}\right.\notag\\
                &\hspace{70pt} + b^{-1}\norm{\phi}\left.
    \left(\frac{c}{\pi}\right)^t\left(\frac{1}{n}\right)^t\right]
      \text{ if }n > m.\label{est_4}
\end{align}

The solution $f$ of the linearized homogeneous problem can now be
estimated in terms of the parameters of the problem:
\begin{thm}\label{est_f}
  If $x\ssub{0} < 0$ then:
  \begin{enumerate}
    \item[\textnormal{(a)}] There exists a constant $C\ssub{pr}>0$
      such that
      \begin{equation*}
        |f\ssub{pr}(t)| < C\ssub{pr}\exp\{-|x\ssub{0}|t\}.
      \end{equation*}
    \item[\textnormal{(b)}] There exist positive constants
      $C\ssub{rm,1}, C\ssub{rm,2}$ such that for $t>1$
      \begin{equation*}
        |f\ssub{rm}(t)| < C\ssub{rm,1}\zeta(t+1)
        \left(\frac{c}{\pi}\right)^{t+1} + C\ssub{rm,2}\zeta(t)
        \left(\frac{c}{\pi}\right)^t  
      \end{equation*}
      with $\zeta$ the Euler-Riemann zeta function.
  \end{enumerate}
\end{thm}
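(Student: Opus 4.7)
The plan is to combine the pointwise bounds \eqref{est_3} and \eqref{est_4} with the hypothesis $x\ssub{0} < 0$ (which, by the ordering in Thm.~\ref{About_roots}, forces every $x\ssub{n} < 0$ and $|x\ssub{n}| \ge |x\ssub{0}|$). Both parts then reduce to summing already-established inequalities, the finite sum giving (a) and the tail series giving (b).

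For part (a) the principal part $f\ssub{pr} = \sum_{n=0}^{m} f\ssub{n}$ is a finite sum. I would apply \eqref{est_3} termwise. Since $|x\ssub{n}| \ge |x\ssub{0}|$, one has $\exp\{-|x\ssub{n}|\} \le \exp\{-|x\ssub{0}|\}$ (which bounds the bracket in \eqref{est_3} uniformly in $n$) and $\exp\{-|x\ssub{n}|t\} \le \exp\{-|x\ssub{0}|t\}$ for $t \ge 0$. Summing the $m+1$ bounded contributions gives a constant of the shape
\[
  C\ssub{pr} \;=\; 2(m+1)\bigl[(c|x\ssub{0}|)^{-1}|f\ssub{0}|\exp\{-|x\ssub{0}|\} + \norm{\phi}\bigr],
\]
which is exactly the claim.

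For part (b) I would apply \eqref{est_4} to each $|f\ssub{n}(t)|$ with $n > m$ and sum:
\begin{align*}
  |f\ssub{rm}(t)| &\le \sum_{n=m+1}^\infty |f\ssub{n}(t)| \\
  &\le \frac{2|f\ssub{0}|}{Ec}\left(\frac{c}{\pi}\right)^{t+1}\sum_{n=m+1}^\infty \frac{1}{n^{t+1}}
   + \frac{2\norm{\phi}}{Eb}\left(\frac{c}{\pi}\right)^t\sum_{n=m+1}^\infty \frac{1}{n^t}.
\end{align*}
For $t > 1$ both tails are majorized by the full Dirichlet series $\zeta(t+1)$ and $\zeta(t)$ respectively, and setting $C\ssub{rm,1} := 2E^{-1}c^{-1}|f\ssub{0}|$ and $C\ssub{rm,2} := 2E^{-1}b^{-1}\norm{\phi}$ yields the stated inequality. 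The restriction $t > 1$ is used precisely to guarantee that $\zeta(t)$ (and hence $\zeta(t+1)$) is finite.

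The main obstacle is not algebraic but structural: the whole calculation rests on the termwise representation $f(t) = \sum_{n=0}^\infty f\ssub{n}(t)$, i.e.\ on the legitimacy of closing the Mellin inversion contour and picking up only the residues with vanishing contribution at infinity. That the tail is summable at all for $t > 1$ comes out of the estimate \eqref{est_4}; combined with the uniform bounds above, one can justify the interchange of summation and integration a posteriori. A minor technical point is that the boundary index $n = m$ has $x\ssub{m} + b < 0$ and strictly belongs to \eqref{case_ii} rather than \eqref{case_i}; this is absorbed by enlarging $C\ssub{pr}$ to dominate both constants.
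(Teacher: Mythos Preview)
Your proof is correct and follows exactly the route the paper intends: the theorem is stated there without a separate proof environment, being presented as the direct summation of the termwise bounds \eqref{est_3} and \eqref{est_4} established immediately before it (part (a) is even asserted in one line just above \eqref{est_3}). Your remark about the boundary index $n=m$ in fact catches a small imprecision in the paper's own text, since $x\ssub{m}+b<0$ by the definition of $m$; as you note, this is absorbed into the constant and does not affect the argument.
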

\begin{cor}\label{decay_1}
  If $x\ssub{0} < 0$ and $c < \pi$, $f(t) \to 0$ uniformly and
  exponentially as $t \to \infty$.
\end{cor}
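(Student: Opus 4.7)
The plan is to invoke Theorem \ref{est_f} directly and split $f(t)=f_{pr}(t)+f_{rm}(t)$, showing that each piece decays exponentially under the stated hypotheses.

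For the principal part, part (a) of Theorem \ref{est_f} immediately gives $|f_{pr}(t)|<C_{pr}\exp\{-|x_{0}|t\}$, which decays exponentially with rate $|x_{0}|>0$ since we assumed $x_{0}<0$. No further work is needed here.

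For the remainder, I would use part (b) together with the fact that the Euler--Riemann zeta function is bounded on $[2,\infty)$ and in fact satisfies $\zeta(t)\to 1$ monotonically as $t\to\infty$. Concretely, for $t\ge 2$ one has $1<\zeta(t)\le\zeta(2)=\pi^{2}/6$, so the $\zeta$-factors in the bound contribute only a uniform constant. The hypothesis $c<\pi$ makes $q:=c/\pi\in(0,1)$, hence $q^{t+1}$ and $q^{t}$ both decay exponentially with rate $|\ln q|=\ln(\pi/c)>0$. Combining,
\begin{equation*}
  |f_{rm}(t)|<\bigl[C_{rm,1}\zeta(2)q+C_{rm,2}\zeta(2)\bigr]q^{t}
  \quad\text{for }t\ge 2.
\end{equation*}

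Finally, adding the two bounds, $f(t)$ is dominated for $t\ge 2$ by $C_{pr}\exp\{-|x_{0}|t\}+C_{rm}q^{t}$ with $C_{rm}$ the bracketed constant above. Both summands decay exponentially, so choosing the smaller of the two decay rates, $\mu:=\min\{|x_{0}|,\ln(\pi/c)\}>0$, yields $|f(t)|\le K\exp\{-\mu t\}$ for some constant $K$ and all $t\ge 2$. This convergence is uniform in $t$ since the constants $C_{pr}, C_{rm,1}, C_{rm,2}$ depend only on $f_{0}$ and $\norm{\phi}$ (through the estimates \eqref{case_i}, \eqref{case_ii} and Theorem \ref{estimates_x}), not on $t$. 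There is no real obstacle; the only mildly delicate point is noting that $\zeta(t)$ stays bounded as $t\to\infty$, and this follows at once from the series definition.
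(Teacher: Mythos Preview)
Your proposal is correct and follows exactly the route the paper intends: the corollary is stated without proof as an immediate consequence of Theorem~\ref{est_f}, and you have simply filled in the details by bounding $\zeta(t)$ uniformly for $t\ge 2$ and noting that $(c/\pi)^t$ decays exponentially when $c<\pi$. The paper's only remark after the corollary is that ``the convergence is determined by $c/\pi$'', which is precisely your observation about the rate $\ln(\pi/c)$.
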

We note that the convergence is determined by $c/\pi$ which factors
out of the partial sums in the series expansion. 

\section{Non-linear behaviour}
\label{nonlinear} 

We now turn to the non-linear equation \eqref{split}, with the usual
initial condition, expressed in the following way:
\begin{equation*}
  \left.
    \begin{aligned}
      &f'(t) + [bf(t) + cf(t-1)] = F(f(t),f(t-1)),\quad t > 0;\\
      &F(u,v) := -u[bu + cv].
    \end{aligned}
  \right\}
\end{equation*}
Our first aim is to obtain a non-linear integral
representation. Towards this we consider the non-homogeneous problem
\eqref{non-hom} once again by writing\break $f = f\ssub{[0]} + f\ssub{[v]}$
with $f\ssub{[0]}$ the solution of the homogeneous problem studied in
\S\ref{Linear} and $f\ssub{[v]}$ the solution of the non-homogeneous
problem under the homogeneous initial condition $f\ssub{[v]}(t) = 0$
for $-1\le t < 0$. For this purpose we introduce the kernel function
$\mathfrak{K}$ as the solution of the problem
\begin{equation}\label{def_kernel}
  \left.
    \begin{aligned}
      &\mathfrak{K}'(t) + b\mathfrak{K}(t) +
          c\mathfrak{K}(t-1)=0,\text{ for } t>0;\\
      &\mathfrak{K}(t) = 0, \text{ for } t < 0;\\
      &\mathfrak{K}(0) = 1.    
    \end{aligned}
  \right\}
\end{equation}
This differs from the homogeneous problems studied in \S\S
\ref{Linear}--\ref{estimates} in the jump discontinuity at $t=0$ which
is not serious. In fact, $\mathfrak{K}(t) = \exp\{-bt\}$ for
$0 \le t \le 1$. Theorems \ref{estimates_x} and \ref{est_f} apply in
this case as well with $f\ssub{0} = 1$ and $\norm{\phi} = 0$. If
$\lambda := \ln\{\pi/c\}$ this leads to:
\begin{align*}
  |\mathfrak{K}(t)| &< 2(c|x\ssub{0}|)^{-1} \exp\{-|x\ssub{0}|(t+1)\}
                      \notag\\
                    &\hspace{50pt}+ 2(Ec)^{-1}\zeta(t+1)\exp\{-\lambda(t+1)\}
                      \text{ for }t>1.
\end{align*}
Therefore there are constants $C>0$, $\mu > 0$ such that
\begin{equation}
  |\mathfrak{K}(t)| \le C\exp\{-\mu t\} \text{ for }t \ge 0.
     \label{est_k}
\end{equation}

Repetition of the Laplace transform procedure of \S\ref{Linear} for
$f\ssub{[v]}$ yields, in the notation of \eqref{Laplaced},
$
  \widehat{f\ssub{[v]}}(s) = \hat{v}(s)/h(s).
$

From the convolution theorem we now have
$
f\ssub{[v]}(t) = \int_0^t\mathfrak{K}(t-t')v(t')\dv{t'}
 = \int_0^t\mathfrak{K}(t')v(t-t')\dv{t'},
$
and hence,
\begin{equation}\label{f0_plus-fv}
  f(t) = f\ssub{[0]}(t) + \int_0^t\mathfrak{K}(t')v(t-t')\dv{t'}.
\end{equation}

The formal calculations above can be justified by
Thm. \ref{est_non-hom} and the (tacit) assumption that $v$ has a
Laplace transform. We continue with the formalism by letting $v(t) =
F(f(t),f(t-1))$. 

To see that the integral representation \eqref{f0_plus-fv} is more
than formal, we note that by virtue of Thm.\ref{Upper-bd}(a) (with
$f$, $f\ssub{0}$ replaced by $f-1$, $f\ssub{0}-1$), that the function
$t \to v(t)$ is bounded. In fact, if $|f(t)| \le M$ then
\begin{equation}\label{bound_F}
  |v(t)| \le \tilde{\beta}M|f(t)| \le \tilde{\beta}M^2,
\end{equation}
so that the Laplace transform exists.

It is now possible to consider the asymptotic stability of the
nonlinear equation \eqref{split} under the initial condition $f(t) =
\phi(t)$ for $-1 \le t \le 0$.

\begin{thm}\label{stability}
  If\  $\lim_{t\to\infty}v(t)$ exists, $x\ssub{0} < 0$ and $c < \pi$,
  then $f(t) - f\ssub{[0]} \to 0$ as $t \to \infty$. 
\end{thm}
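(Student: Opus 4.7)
The natural starting point is the nonlinear integral representation \eqref{f0_plus-fv}, which after a change of variable reads
\begin{equation*}
f(t) - f_{[0]}(t) = \int_0^t \mathfrak{K}(t-s)v(s)\dv{s} = \int_0^t \mathfrak{K}(s)v(t-s)\dv{s}.
\end{equation*}
The hypotheses $x\ssub{0}<0$ and $c<\pi$ are precisely what is needed to ensure, via Corollary \ref{decay_1}, both that $f_{[0]}(t)\to 0$ and that the kernel decays exponentially, $|\mathfrak{K}(s)|\le Ce^{-\mu s}$ as in \eqref{est_k}, so $\mathfrak{K}\in L^1(0,\infty)$. Together with the uniform bound $|v(t)|\le\tilde\beta M^2$ from \eqref{bound_F}, dominated convergence applied to the pointwise limit $v(t-s)\to v_\infty:=\lim_{t\to\infty}v(t)$ (with dominating function $Ce^{-\mu s}\|v\|_{\infty}$) yields
\begin{equation*}
\lim_{t\to\infty}\bigl[f(t)-f_{[0]}(t)\bigr]=v_\infty\int_0^\infty\mathfrak{K}(s)\dv{s}.
\end{equation*}

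Next I would compute the right-hand integral directly from \eqref{def_kernel}. Integrating $\mathfrak{K}'(t)+b\mathfrak{K}(t)+c\mathfrak{K}(t-1)=0$ over $(0,\infty)$, using $\mathfrak{K}(0)=1$, $\mathfrak{K}(\infty)=0$, and $\mathfrak{K}\equiv 0$ on $(-\infty,0)$, a shift of variable in the last term gives $\int_0^\infty\mathfrak{K}(s)\dv{s}=1/(b+c)$. Combined with $f_{[0]}(t)\to 0$, this shows that $f(t)$ itself has a limit $f_\infty=v_\infty/(b+c)$, and consequently $f(t-1)\to f_\infty$ as well.

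The final step is a dynamic consistency argument. Because $v(t)=-f(t)[bf(t)+cf(t-1)]$ and every factor admits the same limit,
\begin{equation*}
v_\infty=-(b+c)f_\infty^2=-\frac{v_\infty^2}{b+c},
\end{equation*}
so $v_\infty\in\{0,\,-(b+c)\}$. The alternative $v_\infty=-(b+c)$ would force $f_\infty=-1$, corresponding in the unshifted variable $F=f+1$ to $F_\infty=0$; but that case is precisely the one excluded in the proof of Theorem \ref{Upper-bd}(b). Hence $v_\infty=0$, and the convolution (which is exactly $f(t)-f_{[0]}(t)$) tends to zero.

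The delicate point is this last paragraph: Theorem \ref{Upper-bd}(b) was proved before the shift to the linearized coordinates of \S\ref{linearize}, so one must verify that positivity of $F$ (equivalently $f>-1$) is preserved under the dynamics considered here and that the exclusion of $F_\infty=0$ imports verbatim. The two preceding steps are routine consequences of the exponential decay of $\mathfrak{K}$ and of $f_{[0]}$ already in hand.
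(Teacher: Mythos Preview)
Your proof is correct, but it proceeds differently from the paper's. The paper establishes that $\lim_{t\to\infty}[f(t)-f_{[0]}(t)]$ \emph{exists} by a Cauchy-type argument: it writes $f_{[v]}(r+t)-f_{[v]}(r)$ as in \eqref{Diff}, bounds the tail integral $\int_r^{r+t}$ by $C\mu^{-1}e^{-\mu r}$, and splits the remaining integral $\int_0^r$ at $r/2$ to exploit the Cauchy property of $v$. You instead apply dominated convergence directly to $\int_0^t\mathfrak{K}(s)v(t-s)\dv{s}$ with dominating function $Ce^{-\mu s}\norm{v}_\infty$, which is cleaner and yields the explicit value $v_\infty\int_0^\infty\mathfrak{K}$ in one stroke. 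Both arguments finish the same way: once the limit exists and $f_{[0]}\to 0$, Theorem~\ref{Upper-bd}(b) (transported through the shift $f\mapsto f-1$) forces the limit to be zero. Your intermediate computation of $\int_0^\infty\mathfrak{K}=1/(b+c)$ and the resulting quadratic for $v_\infty$ is correct but unnecessary---you could shortcut directly to Theorem~\ref{Upper-bd}(b) once you know $f$ has a limit, exactly as the paper does. Your closing caveat is well taken but harmless: equation~\eqref{split} is literally \eqref{expr1} under the substitution $f\mapsto f-1$, so the positivity and the exclusion of the zero limit carry over verbatim.
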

\begin{proof}
  From \eqref{f0_plus-fv} we see that
  \begin{align}
    f\ssub{[v]}(r+t) - f\ssub{[v]}(r)
      &= \int_r^{r+t} \mathfrak{K}(t')v(r+t-t')\dv{t'}\notag\\
        &\qquad+ \int_0^r \mathfrak{K}(t')[v(r+t-t')-v(r-t')]\dv{t'}.
           \label{Diff}
  \end{align}
  To estimate the terms on the right we rely on the inequalities
  \eqref{est_k} and \eqref{bound_F}.

  For the first term on the right of \eqref{Diff} one has
  \begin{equation*}
    \left|\int_r^{r+t}\mathfrak{K}(t')v(r+t-t')\dv{t'}\right|
      \le \tilde{\beta}M^2C\mu^{-1}\exp\{-\mu r\}[1 - \exp\{-\mu t\}]
  \end{equation*}
  and this tends to zero as $r \to \infty$.

  The second term is treated differently by splitting the integral in
  two, one over $[0,r/2]$ and the other over $[r/2,r]$. Since the
  limit of $v(t)$ exists, there is for given $\varepsilon > 0$,
  $t\ssub{\varepsilon}$ such that for $r,t > t\ssub{\varepsilon}$,
  $|v(r+t) - v(r)| < \varepsilon$. Now
  \begin{align*}
    &\left|\int_0^{r/2}
      \mathfrak{K}(t')[v(r+t-t')-v(r-t')]\dv{t'}\right|\\
        &\hspace{80pt}\le  C \mu^{-1}[1-\exp\{-\mu r/2\}]\varepsilon
          \text{ for } r,t > 2 t\ssub{\varepsilon}.
  \end{align*}
  The integral over $[r/2,r]$ is treated the same as the first term on
  the right of \eqref{Diff} and we conclude that
  $\lim_{t\to\infty}[f(t) - f\ssub{[0]}(t)]$ exists. The conclusion
  follows from Cor. \ref{decay_1} and Thm. \ref{Upper-bd}.
\end{proof}
The heuristic argument in \S\ref{linearize} for the linear homogeneous
equation with solution $f\ssub{[0]}$ is justified under the hypotheses
of Thm. \ref{stability}. In fact, since $f\ssub{[0]} \to 0$ as
$t \to \infty$, the theorem shows that under a fairly weak hypothesis
the nonlinear part in the representation \eqref{f0_plus-fv} decays
to zero for large $t$. More explicit conditions for asymptotic
stability are known. In \cite[Chap.11]{Bellman:1963} Lyapunov
stability of general delay equations is demonstrated under the
assumptions that $x\ssub{0} < 0$ and the initial state $\phi$ is near
the equilibrium level. In \cite[Thm. 3]{Ruan:2006} it is shown, by
construction of a Lyapunov function, that asymptotic equilibrium occurs
if $c < b$. 

\section{An example}
\label{example}

After the somewhat daunting mathematical sections above, it is
appropriate to give an example. The one presented here is taken from
an epidemic which at the time of this writing was very much on every
mind. Local data suggests that the theta-model (\S\S \ref{verhulst},
\ref{Norm}) is necessary. Least squares estimates based on early data
suggest the parameter values $\theta = 2.8$, $\beta = 0.017$/day and
$f\ssub{0} = 0.046$. The parameters $\tau$ and $\gamma$ have been
manipulated experimentally to obtain results that correspond
reasonably to perceptions. The choices are $\tau = 32$ days,
$\gamma = 0.9$/day. This gives $B = c\Exp{b} = 1.53778 < \pi/2$ and
$c/\pi = 0.475\dots$ All zeros have negative real part. In fact, a
numerical computation based on \eqref{Alt_eqn} yields
$x\ssub{0} = -0.043\dots < -b = -0.028\dots$ and
$y\ssub{0} = 1.561\dots$

Numerical solution of the initial value problem \eqref{expr1},
\eqref{init-cond}, with the initial state taken as constant, namely
$\phi \equiv f\ssub{0}$, (remembering that $f$ really means
$f^\theta$) resulted in Fig.\ref{Example}. The period of undulation,
estimated from the linearization, according to \eqref{period}, is
$T \approx 2\pi\tau/y\ssub{0} = 128.9$ days. The equilibrium level
corresponds to $F\ssub{\infty} = [\beta\ssup{*}]^{1/\theta} = 0.238$.
\begin{figure}[h!]
 \centering
 \includegraphics[width=0.75\textwidth]{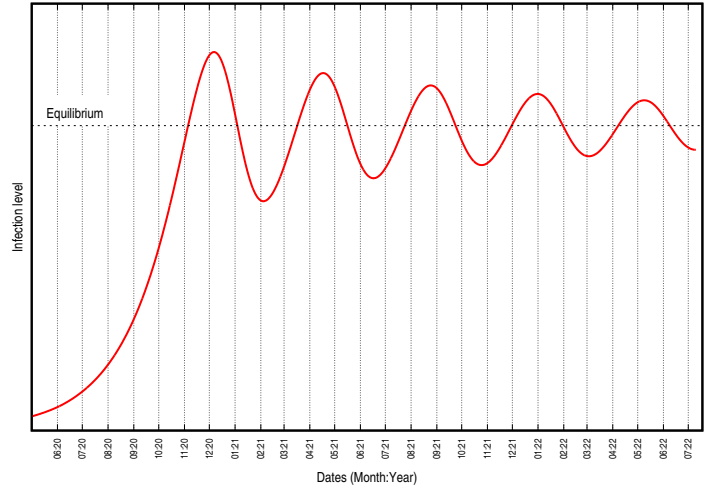} 
 \caption{{\footnotesize Undulation and decay}}
 \label{Example}
\end{figure}

The computation reported above shows undulation and suggests decay of
infectivity levels as time goes on. Our analysis of the linear problem
in \S\ref{Linear} indicates that decay could be to the asymptotic
equilibrium state $f\ssub{\infty} = 1$. We have also shown that by
increasing only $\tau$ a zero with non-negative real part can occur
and then, according to the linearized version, there will be no
decay. According to \eqref{crit-tau} the critical value in the present
example is $\tau = \tau\ssub{0} = 34.033\dots$ days. For this value of
$\tau$, $x\ssub{0} = 0$. The question is: would this be so for the
non-linear equation? 

First, Thm. \ref{Upper-bd}(a) states that infectivity levels cannot run
away, but it can happen in the linearization (according to
\eqref{term}). To come closer to answers it is instructive to
calculate trajectories in the phase portrait ($f'$ as a function of
$f$). This is shown in Fig.\ref{Example+} for the case discussed above
and with $\tau$ alone increased to 35, slightly above
$\tau\ssub{0}$. The result seems to confirm that in one case
($\tau=32$) decay is to the asymptotic point $(1,0)$ and in the other
case ($\tau = 35$) to a limit cycle about this point. The
linearization indeed leads to some clarification.

\begin{figure}[h!]
 \hspace{50pt}
 \includegraphics[width=0.68\textwidth]{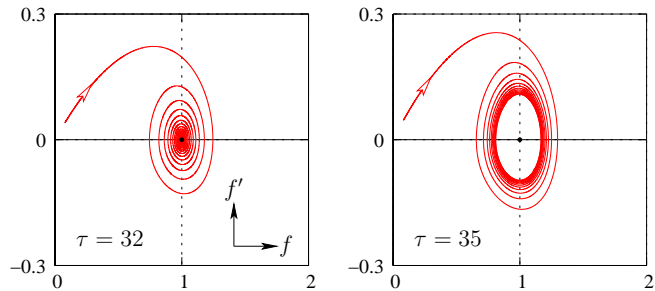} 
 \caption{{\footnotesize Decay to equilibrium (left) and a
     limit cycle (right)}}
 \label{Example+}
\end{figure}

\section{The numerical algorithm}
\label{Numeric}

A brief note on the method used for the numerical calculations is in
order. We refer to the representation obtained in \S\ref{general} by
the substitution $f = 1/g$ and specifically note the expressions
\eqref{intf1} and \eqref{tranfd_eqn} used to obtain qualitative
results. For the numerical solution we integrate the latter expression
over the interval $[t,t+\eta]$ to obtain:
\begin{align*}
  g(t+\eta) &= \exp\{-\delta i(t)\}g(t)
  + \beta\ssup{*}[1-\exp\{-\delta i(t)\}]\\
  &\qquad + \tilde{\beta}\beta\ssup{*}\gamma\lsup{*}
  \exp\{-i(t+\eta)\}\int_t^{t+\eta} \exp\{i(t')\}f(t'-1)\dv{t'},
\end{align*}
with
\begin{equation*}
  \delta i(t) := i(t+\eta) - i(t) = \tilde{\beta}[\eta 
        - \gamma\lsup{*}\int_t^{t+\eta}f(t'-1)\dv{t'}].
\end{equation*}
Approximation of the integrals above by the trapezium rule gives
\begin{align*}
  &\int_t^{t+\eta} \exp\{i(t')\}f(t'-1)\dv{t'}\\
  &\qquad\qquad\approx \tfrac{1}{2}\eta [\exp\{i(t+\eta)\}
       f(t+\eta -1) + \exp\{i(t)\}f(t-1)];\\
    &\hspace{22pt}\delta i(t) \approx \tilde{\beta}\eta\{1 - \tfrac{1}{2}
    \gamma\lsup*[f(t+\eta -1)+f(t-1)]\}.
\end{align*}
Combination of everything results in the computational algorithm 
\begin{align*}
  g(t+\eta) &\approx g(t)\exp\{-\delta i(t)\}
  + \beta\ssup{*}[1-\exp\{-\delta i(t)\}]\\
  &\qquad + \tfrac{1}{2}\eta\tilde{\beta}\beta\ssup{*}\gamma\lsup{*}
  [f(t+\eta - 1) + \exp\{-\delta i(t)\}f(t-1)].
\end{align*} 
Since $f(t)$ is given for $-1 \le t \le 0$, this can be computed
(coding is straightforward). The algorithm is grounded in the problem.


\section{Concluding unscientific  remarks}
\label{Unscientific}

The caption above is borrowed from a similar-sounding title by
Johannes Climacus (S\o ren Kierkegaard), published in 1846 which, in a
way, echoes the Socratic aphorism that {\it the only wisdom we have is
  knowing that we do not know} --- a paradox that can be (partly)
resolved if `knowing' is replaced by `understanding'. Understanding,
it has been said, expands when horizons of knowing meet. It is
never complete.

A fundamental tenet for the mathematical description of growth is the
Verhulst logistic model which states that growth is determined by what
is left to grow upon. It has the property that growth will increase to
devour available resources. If the initial level is below equilibrium
levels will increase towards equilibrium. On the other hand if the
level is initially above equilibrium it will decrease towards
equilibrium. Since $\beta\ssup* = 1 - \gamma\lsup*$, the
logistic-recovery equation \eqref{expr1} may be re-phrased as
\begin{equation*}
  f'(t) = \tilde{\beta} f(t) [1 - f(t)] + c f(t)[f(t)-f(t-1)],
\end{equation*}
which is the logistic equation perturbed by a recovery term that
occurs in {\it Van der Plank's equation} \eqref{vdPlank} without
dormancy. Recovery has the effect of overshooting the equilibrium
state. Logistic growth, acting like a counterweight, then forces
growth to decrease. Once the level is below equilibrium, growth will
turn upwards again. This cyclic process can decay towards stable
equilibrium, but may also become repetitive like the motion of a
pendulum or a planet orbiting the sun. 

It is interesting to note that unfettered undulation can only occur
when $\rho = 1/R\ssub{0} > 1$ (Thm.\ref{About_roots}). The idea that
$R\ssub{0} < 1$ is prudent, may be questioned. Within the present
discussion this can lead to lowering of the equilibrium level, but at
the price of an undulation in which equilibrium could be a
spectre. But then, $R\ssub{0}$ in the classical SIR model cannot have
the same meaning as $1/\rho$ in the model discussed here.

The notion of recovery as used here should not be confused with the
clinical use of the word. One might ask: what is recovering, the
patient or the pathogen? The long recovery period (32 days) used in
the example makes the question more incisive; so does the loss of
asymptotic equilibrium when recovery takes longer. One could argue
that a sufficiently short recovery period provides less opportunity
for transmission (or evolution) of the pathogen so that decay to
equilibrium would be the result.  We should take heed of the view of
Dr. James van der Plank in \cite{vdplank:1965}: observations of
infected subjects merely reflect the state of the pathogen.

Mathematical models serve as a basis for motivated speculation and not
much more. They count among the many metaphors we invent to explain
and understand what is called {\it reality}. Computational experiments
with such models, not supported by mathematical insight, are similar
to searching for ``\dots two grains of wheat hid in two bushels of
chaff \dots''. The search for grains of truth may be long and arduous.

\bibliographystyle{plainurl}      


\begin{thebibliography}{10}

\bibitem{Bellman:1943}
R.E. Bellman.
\newblock The {S}tability of {S}olutions of {L}inear {D}ifferential
  {E}quations.
\newblock {\em Duke Math. J.}, 10:643--647, 1943.

\bibitem{Bellman:1963}
R.E. Bellman and K.E. Cooke.
\newblock {\em Differential-{D}ifference {E}quations}.
\newblock Academic Press, 1963.

\bibitem{Buonomo:2015}
B.~Buonomo and M.~Cerasuolo.
\newblock The {E}ffect of {T}ime {D}elay in {P}lant-{P}athogen {I}nteractions
  with {H}ost {D}emography.
\newblock {\em Math Biosc. Eng.}, 12:473--490, 2015.

\bibitem{Cacciapaglia:2022}
G.~Cacciapaglia, C.~Cot, and F.~Sannino.
\newblock Multiwave pandemic dynamics explained: how to tame the next wave of
  infectious diseases, March 2021.
\newblock In Nature: Scientific Reports, Vol. 11.
  \url{https://www.nature.com/articles/s41598-021-85875-2}.

\bibitem{Cunningham:1954}
W.J. Cunningham.
\newblock A nonlinear difference-differential equation of growth.
\newblock {\em Proc. Nat. Acad. Sci.}, 40:709--713, 1954.

\bibitem{DellaMorte:2021}
M.~Della~Morte and F.~Sannino.
\newblock Renormalization group approach to pandemics as a time-dependent {SIR}
  model, January 2021.
\newblock In Frontiers in Physics, Vol. 8.
  \url{https://doi.org/10.3389/fphy.2020.591876}.

\bibitem{Feller:1966}
W.~Feller.
\newblock {\em An {I}ntroduction to {P}robability {T}heory and {I}ts
  {A}pplications}, volume~II.
\newblock John Wiley \& Sons, Inc., 1966.

\bibitem{Gilpin:1973}
M.E. Gilpin and F.J. Ayala.
\newblock Global {M}odels of {G}rowth and {C}ompetition.
\newblock {\em Proc. Nat. Acad. Sci. USA}, 70:3590--3593, 1973.

\bibitem{Gopalsamy:1992}
K.~Gopalsamy.
\newblock {\em Stability and {O}scillations in {D}elay {D}ifferential
  {E}quations}, volume~74 of {\em Mathematics and its {A}pplications}.
\newblock Springer, Dordrecht, 1992.

\bibitem{Hutchinson:1948}
G.E. Hutchinson.
\newblock Circular {C}ausal {S}ystems in {E}cology.
\newblock {\em Ann. New York Acad. Sci.}, 50:221--246, 1948.

\bibitem{Jones:1961}
G.S. Jones.
\newblock Asymptotic behavior and periodic solutions of a nonlinear
  differential-difference equation.
\newblock {\em Proc. Nat. Acad. Sci.}, 47:879--882, 1961.

\bibitem{Kermack:1927}
W.O. Kermack and A.G. McKendrick.
\newblock A {C}ontribution to the {M}athematical {T}heory of {E}pidemics.
\newblock {\em Proc. Roy. Soc. London. Ser. A}, {115}:700--721, 1927.

\bibitem{Reiser:2020}
P.A. Reiser.
\newblock Modified {SIR} {M}odel {Y}ielding a {L}ogistic {S}olution.
\newblock Preprint, 2021.
\newblock ArXiv. \url{https://doi.org/10.48550/arXiv.2006.01550}.

\bibitem{Ruan:2006}
S.~Ruan.
\newblock Delay {D}ifferential {E}quations in {S}ingle {S}pecies {D}ynamics.
\newblock In O.~Arino, M.~Hbid, and E.A. Dads, editors, {\em Delay
  {D}ifferential {E}quations and {A}pplications}, volume 205 of {\em NATO
  Science Series (II. Mathematics, Physics and Chemistry)}, pages 477--517.
  Springer, Dordrecht, 2006.

\bibitem{vdplank:1963}
J.E. Van~der {P}lank.
\newblock {\em Plant {D}iseases. {E}pidemics and {C}ontrol}.
\newblock Academic Press, New York, 1963.

\bibitem{vdplank:1965}
J.E. Van~der {P}lank.
\newblock Dynamics of {P}lant {D}isease.
\newblock {\em Science}, 147:120--124, 1965.

\bibitem{Verhulst:1838}
P-F Verhulst.
\newblock Notice sur la loi que la population suit dans son accroissement.
\newblock {\em Correspondance math\'ematique et physique}, 10:113--121, 1838.

\bibitem{Wright:1955}
E.M. Wright.
\newblock A nonlinear difference-differential equation.
\newblock {\em J. Reine Angew. Math.}, 194:66--87, 1955.

\end{thebibliography}

\end{document}